\title{Conditional Hardness of Earth Mover Distance}
\author{Dhruv Rohatgi \\ MIT \\ drohatgi@mit.edu}
\newtheorem{theorem}{Theorem}[section]
\newtheorem{lemma}[theorem]{Lemma}
\newtheorem{proposition}[theorem]{Proposition} 
\theoremstyle{definition}
\theoremstyle{plain}
\newtheorem*{ovc}{Orthogonal Vectors Conjecture}
\newtheorem*{hsc}{Hitting Set Conjecture}
\newcommand{\norm}[1]{\left \lVert #1 \right \rVert}
\newcommand{\polylog}{\text{polylog}}
\newcommand{\poly}{\text{poly}}
\tikzstyle{block} = [rectangle, draw, fill=blue!10, 
\tikzstyle{line} = [draw, -latex']
\begin{document}

\maketitle

\begin{abstract}
The Earth Mover Distance (EMD) between two sets of points $A, B \subseteq \mathbb{R}^d$ with $|A| = |B|$ is the minimum total Euclidean distance of any perfect matching between $A$ and $B$. One of its generalizations is asymmetric EMD, which is the minimum total Euclidean distance of any matching of size $|A|$ between sets of points $A,B \subseteq \mathbb{R}^d$ with $|A| \leq |B|$. The problems of computing EMD and asymmetric EMD are well-studied and have many applications in computer science, some of which also ask for the EMD-optimal matching itself. Unfortunately, all known algorithms require at least quadratic time to compute EMD exactly. Approximation algorithms with nearly linear time complexity in $n$ are known (even for finding approximately optimal matchings), but suffer from exponential dependence on the dimension.

In this paper we show that significant improvements in exact and approximate algorithms for EMD would contradict conjectures in fine-grained complexity. In particular, we prove the following results:
\begin{itemize}
\item Under the Orthogonal Vectors Conjecture, there is some $c>0$ such that EMD in $\Omega(c^{\log^* n})$ dimensions cannot be computed in truly subquadratic time.
\item Under the Hitting Set Conjecture, for every $\delta>0$, no truly subquadratic time algorithm can find a $(1 + 1/n^\delta)$-approximate EMD matching in $\omega(\log n)$ dimensions.
\item Under the Hitting Set Conjecture, for every $\eta = 1/\omega(\log n)$, no truly subquadratic time algorithm can find a $(1 + \eta)$-approximate asymmetric EMD matching in $\omega(\log n)$ dimensions.
\end{itemize} 
\end{abstract}

\section{Introduction}

In the \emph{Earth Mover Distance (EMD) problem}, we are given two sets $A$ and $B$ each with $n$ vectors in $\mathbb{R}^d$, and want to find the minimum cost of any perfect matching between $A$ and $B$, where an edge between $a \in A$ and $b \in B$ has cost $\|a-b\|_2$.

In a harder variant of the problem (``EMD matching''), we want to actually \emph{find} a perfect matching with the optimal cost. This is a special case of the \emph{geometric transportation problem}, in which each vector of $A$ has a positive supply and each vector of $B$ has a positive demand, and the goal is to find an optimal ``transportation map'', i.e., match each unit of supply with a unit of demand while minimizing the total distance, summed over all units of supply. 

A more general variant of the EMD problem (with an analogous extension to arbitrary supplies/demands) allows for the possibility that $|A| < |B|$, and requires the map from $A$ to $B$ to be an injection.
We refer to this variant as the \textit{asymmetric} EMD problem.

Earth Mover Distance is a discrete analogue of the Monge-Kantorovich metric for probability measures, which has connections to various areas of mathematics \cite{Villani2003}. Furthermore, computing distance between probability measures is an important problem in machine learning \cite{Sandler2011, Mueller2015, Arjovsky2017, Flamary2016} and computer vision \cite{Rubner2000, Bonneel2011, Solomon2015}, to which Earth Mover Distance is often applied. To provide a few specific examples, computing geometric transportation cost has applications in image retrieval \cite{Rubner2000}, where asymmetric EMD allows the distance to deal with occlusions and clutter. In computer graphics, computing the actual transportation map is useful for interpolation between distributions, though the metric may be non-Euclidean \cite{Bonneel2011}. 

For the exact geometric transportation problem, the best known algorithm simply formulates the problem in terms of minimum cost flow, yielding a runtime of $O(n^{2.5} \cdot \polylog(U))$ where $U$ is the total supply (assuming that $d$ is subpolynomial in $n$) \cite{Lee2013, Lee2014}. Even for EMD, the best known algorithm follows directly from the general graph algorithms for maximum matching in $O(m \sqrt{n})$ time \cite{Hopcroft73}. 

The situation is better for approximation algorithms. There has been considerable work on both estimating the transportation cost \cite{Indyk2007, Andoni2014} and computing the actual map \cite{Sharathkumar2012, Agarwal2017, Altschuler2017} in time nearly linear in $n$ but exponential in dimension $d$. Most recently, it was shown \cite{Khesin2019} that there is an $O(n \epsilon^{-O(d)} \log(U)^{O(d)} \log^2 n)$ time algorithm which outputs a transportation map with cost at most $(1+\epsilon)$ times the optimum. This algorithm is very efficient when the dimension $d$ is constant or nearly constant, and when $\epsilon$ is not too small---say, constant or $O(1/\polylog(n))$. However, when $d = \omega(\log n)$, the algorithm is not guaranteed to find even a constant-factor approximation in quadratic time.

Despite considerable progress on improving the {\em algorithms} for geometric matching problems over the last two decades, little is known about {\em lower bounds} on their computational complexity. In particular, we do not have any evidence that a running time of the form $O(n \cdot \poly(d,\log n, 1/\epsilon))$ is not achievable. This is the question we address in this paper.

\subsection{Our Results}

In this paper we provide evidence that geometric transportation problems in high-dimensional spaces cannot be solved in (truly) subquadratic time. This applies to both exact and approximate variants of the problem, and even in the special case of unit supplies. In particular we show a conditional quadratic hardness for the exact EMD problem, as well as the approximate variant of EMD when the (approximately) optimal matching must be reported.

Our hardness results are based on two well-studied conjectures in fine-grained complexity: Orthogonal Vectors Conjecture and Hitting Set Conjecture (see \cite{VWilliams2018} for a comprehensive survey).

\subsubsection{Exact EMD and Orthogonal Vectors Conjecture} The \emph{Orthogonal Vectors (OV) problem} takes as input two sets $A, B \subseteq \{0,1\}^{d(n)}$ where $|A| = |B| = n$ and asks whether there are some vectors $a \in A$ and $b \in B$ such that $a \cdot b = 0$. The popular \emph{Orthogonal Vectors Conjecture} hypothesizes that in sufficiently large dimensions, the obvious quadratic time algorithm for OV is nearly optimal:

\begin{ovc}
Let $d(n) = \omega(\log n)$. For every constant $\epsilon > 0$, no randomized algorithm can solve $d(n)$-dimensional OV in $O(n^{2-\epsilon})$ time.
\end{ovc}

A plethora of problems have been shown to have nontrivial lower bounds under the Orthogonal Vectors Conjecture; often these lower bounds are essentially tight (e.g. \cite{Abboud2015, Abboud2016, Backurs2015, Bringmann2015, Williams2017}; see \cite{VWilliams2018} for a comprehensive survey). It is known that if the conjecture fails, then the Strong Exponential Time Hypothesis (SETH) fails as well \cite{Williams2005}, providing evidence for hardness of OV, and by extension of these problems to which OV can be reduced.

Our first result shows that EMD in ``nearly constant'' dimension is hard to compute exactly in truly subquadratic time, under the Orthogonal Vectors Conjecture:

\begin{theorem} \label{theorem:exactEMD}
There is a constant $c > 0$ under which the following holds. If there exists $\epsilon>0$ and $d(n) = \Omega(c^{\log^* n})$ such that EMD on $O(\log n)$-bit vectors in $d(n)$ dimensions can be computed in $O(n^{2-\epsilon})$ time, then the Orthogonal Vectors Conjecture is false.
\end{theorem}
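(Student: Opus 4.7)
The plan is to reduce $\omega(\log n)$-dimensional OV to EMD in dimension $\Omega(c^{\log^* n})$ (on $O(\log n)$-bit vectors) with only $n^{1+o(1)}$ blow-up in the number of points, so that an $O(N^{2-\epsilon})$-time EMD algorithm at this dimension would give a subquadratic OV algorithm in dimension $\omega(\log n)$, contradicting OVC. The iterated-log form $c^{\log^*n}$ strongly suggests a recursive construction: a base reduction followed by $\log^* n$ rounds of ``dimension compression'' on the EMD side, each round shrinking the dimension by (roughly) a logarithm and contributing a constant factor $c$ of slack to the final dimension.

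The first ingredient is a direct base reduction from OV to EMD of comparable dimension. A standard gadget-style construction should work: represent each $a\in A$ and each $b\in B$ as small configurations in $\mathbb{R}^d$ together with carefully positioned ``anchor'' and ``filler'' points, arranged so that every perfect matching pays a fixed baseline cost plus a penalty that is minimized precisely when some matched pair $(a,b)$ is orthogonal. Scaling the construction so the YES/NO gap is inverse-polynomial lets one keep coordinates at $O(\log n)$ bits, and the number of points blows up by only $O(1)$.

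The crux of the argument is a self-reduction for EMD that converts an instance (of the kind produced by the previous level) in dimension $d$ into polynomially many instances in dimension at most $d/c$ or so, preserving the YES/NO gap and blowing up the number of points by only $n^{o(1/\log^*n)}$ per invocation. A plausible template is to bucket the points according to a small set of carefully chosen linear projections, recurse on the buckets, and then stitch the per-bucket optimal matchings back into an (almost) global optimum; the structural regularity of the gadget-based EMD instances from the base reduction should make such stitching feasible. Applying this compression $\log^* n$ times starting from dimension $d_0=\omega(\log n)$ lands at a dimension of the order $c^{\log^* n}$, and the cumulative point-count blow-up stays at $n^{o(1)}$.

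The main obstacle, by a wide margin, is this dimension-compression step: Euclidean EMD does not decompose across coordinates, so reducing the dimension while faithfully preserving the optimal cost (or even just the YES/NO distinguishing gap embedded in the instance) is subtle, and the per-level overhead in $n$ must be essentially negligible---strictly smaller than $n^{o(1)}$---because it is applied $\log^* n$ times. A secondary technical difficulty is controlling bit-complexity and maintaining the inverse-polynomial gap across all recursive levels simultaneously, which likely forces the base gadget and the compression routine to be designed jointly rather than in isolation.
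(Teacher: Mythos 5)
Your plan misidentifies where the difficulty lies and, as a result, tries to solve a much harder problem than necessary. The $c^{\log^* n}$ dimension bound does not come from a bespoke dimension-compression scheme for EMD. It is inherited wholesale from a prior result: Chen (2018) already shows that, under OVC, Bichromatic $\ell_2$-Closest Pair on $O(\log n)$-bit vectors in $c^{\log^* n}$ dimensions requires $n^{2-o(1)}$ time. The paper's own contribution to this theorem is just a reduction from Bichromatic Closest Pair to exact EMD in (essentially) the same dimension, with only a constant additive dimension blow-up. Given $A,B$, one pads $A$ with $n-1$ copies of a vector $v$ that is very far from $B'$, pads $B$ with $n-1$ copies of a vector $u$ that is very far from $A'$, and arranges that $u$ is exactly equidistant from every vector in $A'$ and $v$ exactly equidistant from every vector in $B'$; then the optimal matching is forced to match exactly one ``real'' pair $(a',b')$, and the EMD equals a fixed baseline plus a monotone function of $\min_{a,b}\|a-b\|_2$. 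Making the equidistance exact in low dimension requires appending $O(1)$ extra coordinates and decomposing an integer as a sum of $O(1)$ squares (Lagrange), which is what keeps the additive dimension overhead constant and the entries $O(\log n)$ bits.

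The ``dimension compression'' step you flag as the crux is not only unnecessary but is, as you concede, not something you know how to carry out, and there is no reason to expect it to be feasible: Euclidean EMD does not decompose across coordinate blocks, and a routine that halves (or log-shrinks) the dimension of general EMD instances while preserving the optimal cost with $n^{o(1/\log^* n)}$ overhead would be a strong new result in its own right. Likewise, your ``base reduction'' is a reduction from OV rather than from Closest Pair, and is described only at the level of vague gadget intuition; the direct OV-to-EMD gadget you sketch would still need the equidistant-anchor trick and the square-decomposition step to work in low dimension with bounded bit-length, and none of that is present. In short, the proposal gets the high-level shape (anchor/filler points, gap amplification) roughly right but attributes the $\log^*$ to the wrong source and leaves the actual load-bearing step---citing the existing hardness of low-dimensional Closest Pair and reducing \emph{from} it---entirely missing.
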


Using techniques similar to those for the above theorem, we also address a question raised in \cite{Basu2018} about the complexity of the maximum/minimum weighted assignment problem when the weight matrix has low rank. The minimum weighted assignment problem is defined as follows: given an $n \times n$ weight matrix which determines a complete bipartite graph, find the cost of the minimum weight perfect matching. Motivated by the observation that the problem can be solved in $O(n \log n)$ time if the weight matrix is rank-$1$, it is asked whether there is an $O(nr^2 \log n)$ time algorithm for rank-$r$ matrices \cite{Basu2018}. We can answer this question in the negative, under the Orthogonal Vectors Conjecture. In fact, we can show something stronger (see Appendix~\ref{appendix:lowrank} for the proof):

\begin{theorem}\label{theorem:lowrank}
There is a constant $c > 0$ under which the following holds. If there exists $\epsilon>0$ and $r(n) = \Omega(c^{\log^* n})$ such that the minimum assignment problem with rank-$r$ weight matrices can be solved in $O(n^{2-\epsilon})$ time, then the Orthogonal Vectors Conjecture is false.
\end{theorem}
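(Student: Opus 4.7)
The plan is to mimic the two-stage structure of Theorem~\ref{theorem:exactEMD}: first apply its iterated dimension-reduction machinery (the source of the $c^{\log^* n}$ bound) to bring the OV dimension down to $\Theta(c^{\log^* n})$, and then, in place of the final reduction to EMD, apply a gadget that converts low-dimensional OV into a minimum assignment instance whose cost matrix has rank $O(d)$. So the only genuinely new ingredient is an ``OV-to-low-rank-assignment'' gadget, playing the role analogous to the ``OV-to-low-dimensional-EMD'' step used at the base of Theorem~\ref{theorem:exactEMD}.

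For this gadget, given an OV instance $A,B\subseteq\{0,1\}^d$ with $|A|=|B|=n$, I would augment each side with $n$ ``wildcard'' elements to form sets of size $2n$, and define the $2n\times 2n$ cost matrix $W$ whose real--real block is the Gram-style matrix $W[a_i,b_j]=a_i\cdot b_j$ and whose remaining entries all equal $1$. A short counting argument gives: any assignment $\sigma$ using exactly $k$ real--real edges has total cost $2n - k + \sum_{\text{real--real}} a_i\cdot b_{\sigma(i)}$, so the optimum value equals $2n-k^*$, where $k^*$ is the maximum size of a matching using only orthogonal real--real pairs. In particular, the minimum assignment value is at most $2n-1$ iff some pair $(a_i,b_j)$ is orthogonal. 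Writing $W = XY^\top + \mathbf{1}_{2n}\mathbf{1}_{2n}^\top - e_{[n]}e_{[n]}^\top$, where $X, Y \in \mathbb{R}^{2n\times d}$ have the OV vectors in their top $n$ rows and zeros in their bottom $n$ rows and $e_{[n]}$ is the indicator of the first $n$ coordinates, shows that $\mathrm{rank}(W)\leq d+2$ and supplies an explicit factored input representation in $O(nd)$ space for the hypothetical low-rank algorithm.

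Composing the two stages, if rank-$r$ minimum assignment is solvable in $O(n^{2-\epsilon})$ time for some $r=\Omega(c^{\log^* n})$, then chaining the dimension reduction with the gadget gives a truly subquadratic algorithm for OV in dimension $\omega(\log n)$, contradicting the Orthogonal Vectors Conjecture. The main obstacle is the first stage: I expect the proof of Theorem~\ref{theorem:exactEMD} to factor through an internal OV-to-OV dimension reduction iterated $\log^* n$ times before the final OV-to-EMD step, in which case one simply unplugs that final step and plugs in the rank gadget above. If that modular structure is not explicit, one must verify that the same iteration can be carried out preserving OV structure (rather than being entangled with the EMD reduction); some routine bookkeeping is also needed to track how the vector count grows through the iteration, but any subpolynomial blowup stays absorbed in the exponent of a truly-subquadratic bound.
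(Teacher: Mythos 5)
Your OV-to-low-rank-assignment gadget is correct as a standalone reduction: the wildcard construction does encode the maximum orthogonal matching, and the decomposition $W = XY^\top + \mathbf{1}\mathbf{1}^\top - e_{[n]}e_{[n]}^\top$ does give rank at most $d+2$. Applied to $\omega(\log n)$-dimensional OV, it would yield OVC-hardness of minimum assignment at rank $\omega(\log n)$. But the theorem demands hardness at the much smaller rank $\Theta(c^{\log^* n})$, and your proposed ``first stage'' --- a dimension reduction bringing OV itself down to $\Theta(c^{\log^* n})$ dimensions --- cannot exist. Binary OV in $d = o(\log n)$ dimensions has only $2^d = n^{o(1)}$ distinct possible vectors and is solvable in near-linear time by bucketing, so it is not hard for any conjecture. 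There is no internal ``OV-to-OV'' iteration in the proof of Theorem~\ref{theorem:exactEMD} to unplug.

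What the paper actually does is invoke Theorem~\ref{thm:chen} (from \cite{Chen2018}): assuming OVC, \emph{bichromatic $\ell_2$-closest pair} in $c^{\log^* n}$ dimensions with $O(\log n)$-bit integer entries requires $n^{2-o(1)}$ time. The $c^{\log^* n}$ saving is possible precisely because Closest Pair over integers is richer than binary OV. The appendix then reduces Closest Pair in dimension $d$ to minimum assignment with a rank-$O(d)$ matrix, by reusing the transformed point sets $A', B'$ from Theorem~\ref{theorem:EMDCP} and taking the cost matrix $M_{ij} = \|A'_i - B'_j\|_2^2$, which is low rank by the expansion $\|A'_i\|^2 + \|B'_j\|^2 - 2\,A'_i\cdot B'_j$. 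The padding by copies of the special vectors $u$ and $v$ (rather than uniform wildcards with cost $1$) is what forces exactly one ``real'' pair into the optimal assignment and makes the optimum encode the closest-pair distance. So the missing piece in your proposal is a Closest-Pair-to-low-rank-assignment gadget rather than an OV-to-low-rank-assignment gadget, together with the recognition that Chen's result about Closest Pair, not an OV-to-OV dimension reduction, is the source of the $c^{\log^* n}$ bound.
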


\subsubsection{Approximate EMD and the Hitting Set Conjecture} The second conjecture on which we base some of our results is hardness of the \emph{Hitting Set (HS) problem}. This problem, similar to OV, takes two sets of vectors $A, B \subseteq \{0,1\}^d$ as input, and asks whether there exists some $a \in A$ such that $a \cdot b \neq 0$ for every $b \in B$.

\begin{hsc}
Let $d(n) = \omega(\log n)$. For every constant $\epsilon > 0$, no randomized algorithm can solve $d(n)$-dimensional HS in $O(n^{2-\epsilon})$ time.
\end{hsc}

It is known that HS reduces to OV, but the reverse reduction is unknown, so the Hitting Set Conjecture is ``stronger'' than the Orthogonal Vectors Conjecture \cite{Abboud2016}. The Hitting Set Conjecture has been used to prove conditional hardness of the Radius problem in sparse graphs \cite{Abboud2016}. The utility of the Hitting Set problem in conditional hardness results comes from the difference between its ``$\exists \forall$'' logical structure and the ``$\exists \exists$'' logical structure of the Orthogonal Vectors problem, which makes it more natural for some types of problems. 

Under the Hitting Set Conjecture, we prove hardness of approximation for the EMD matching problem (in which we want to find the optimal or nearly-optimal matching). Simultaneously we obtain stronger hardness of approximation for asymmetric EMD matching.

\begin{theorem}\label{theorem:approxEMD}
For any $\delta > 0$ and $d(n) = \omega(\log n)$, if $(1 + 1/n^\delta)$-approximate EMD matching can be solved in $d(n)$ dimensions in truly subquadratic time, then the Hitting Set conjecture is false.
\end{theorem}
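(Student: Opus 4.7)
The plan is a fine-grained reduction from Hitting Set to $(1 + 1/n^\delta)$-approximate EMD matching. Given an HS instance $(A, B)$ with $A, B \subseteq \{0,1\}^d$, $|A| = |B| = n$, and $d = \omega(\log n)$, I would construct an EMD matching instance $(P, Q)$ of size $n^{1 + o(1)}$ in $O(d)$ dimensions such that any $(1 + 1/n^\delta)$-approximate perfect matching of $P$ to $Q$, together with $O(nd)$ postprocessing, decides HS. Because the instance size is only $n^{1 + o(1)}$, a truly subquadratic approximate EMD matching algorithm would yield a truly subquadratic HS algorithm, contradicting the Hitting Set Conjecture.

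The construction would combine two ingredients. The first is an OV-style fixed-norm embedding: rescale and pad each $a \in A, b \in B$ to have $\ell_2$-norm $\sqrt{k}$ for some parameter $k$, so that $\|a - b\|_2^2 = 2k - 2(a \cdot b)$, making every hit pair shorter than every non-hit pair by a concrete additive amount. The second is a selector gadget, built from a small number of auxiliary points together with extra coordinates, that forces every perfect matching to designate a single candidate $a^* \in A$ and to charge one edge of length $\|a^* - b\|_2$ (possibly after some amplification) for each $b \in B$, while routing all remaining points through cheap default edges whose cost is independent of the designated $a^*$. The gadget is what makes the designation both unavoidable and readable directly from the matching.

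With this structure, the matching cost decomposes as $C_{\text{base}} + \sum_{b \in B} \|a^* - b\|_2$, up to replication, where the first summand is a fixed base cost from default edges. A true hitter makes every term in the sum at most $\sqrt{2k - 2}$, while a non-hitter forces at least one term to equal $\sqrt{2k}$, an additive gap of roughly $1/\sqrt{k}$ in the sum. By tuning $C_{\text{base}}$ and the embedding parameters so that this additive gap translates into a multiplicative factor of $1 + \Theta(1/n^\delta)$ on the total cost, any $(1 + 1/n^\delta)$-approximate matching must designate a true hitter in the YES case. The designation can then be extracted from the matching and verified in $O(nd)$ time by checking $a^* \cdot b$ for every $b \in B$.

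The main technical obstacle is amplifying the naturally small gap into the required multiplicative factor. The vanilla reduction gives per-term additive gap $\Theta(1/\sqrt{k})$ against a total cost $\Theta(n\sqrt{k})$, hence relative gap only $\Theta(1/(nk))$, which is too small for $\delta < 1$ since $k \geq 1$. Achieving the full range of constant $\delta > 0$ requires amplification inside the gadget, for instance embeddings that enlarge the per-edge non-hit penalty or gadget structures that replicate the influence of a single non-hit $b$ across many matching edges, all while keeping the EMD instance size $n^{1 + o(1)}$ and the dimension $\omega(\log n)$. Getting this balance right uniformly over all $\delta > 0$ is the delicate part of the reduction.
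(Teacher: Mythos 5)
Your proposal heads in a genuinely different direction from the paper's proof, and it has a gap that you yourself identify but do not close. You try to decide Hitting Set from the \emph{cost} of a single approximate matching on a selector-gadget instance, so that the designated candidate $a^*$ can be read off the output. As you compute, the non-hit penalty per edge is $\sqrt{2k}-\sqrt{2k-2}=\Theta(1/\sqrt{k})$ against a total cost $\Theta(n\sqrt{k})$, giving a relative gap of only $\Theta(1/(nk))$; since $k\ge d=\omega(\log n)$, this is strictly below $1/n$, so no constant $\delta<1$ is reachable. The ``amplification'' you gesture at does not help: replicating $B$ by a factor $m$ scales both the additive gap ($m/\sqrt{k}$) and the base cost ($mn\sqrt{k}$) identically, so the relative gap is unchanged, and the same holds for blowing up the per-edge penalty, since the base cost blows up with it. The approach of encoding an $\exists\forall$ question into a single scalar EMD value appears to be stuck at a $1/n$ approximation gap, which is exactly the regime the paper's Theorem~\ref{thm:approxtoov} already covers under OVC.

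The paper's proof sidesteps this by never trying to decide HS from a single cost. Instead it extracts \emph{structural} information from the returned matching and iterates. Concretely, it goes through a chain of intermediate problems: approximate EMD matching reduces to approximate asymmetric EMD matching (Lemma~\ref{lemma:symmetrize}), which reduces to approximate Maximum Orthogonal Matching (Lemma~\ref{lemma:mom}), in which every edge of the returned matching has length exactly $\sqrt{4d+2}$ or $\sqrt{4d+4}$, so the matching \emph{itself} certifies which pairs are orthogonal, and the approximation error only costs $O(\epsilon n d)$ missed pairs additively. Then Theorem~\ref{thm:findov} reduces approximate Find-OV to approximate MOM, handling vectors of high ``orthogonal degree'' by random sampling (otherwise a small maximum orthogonal matching could hide many hittable vectors). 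Finally Theorem~\ref{thm:hs} shows that Find-OV with additive error $n/2$ suffices to decide Hitting Set in $O(\log n)$ rounds, by removing the found vectors from $A$ and doubling the multiplicities of the survivors before recursing. The per-round error bound of $n/2$ is why a $1+1/n^\delta$ approximation, which gives Find-OV error roughly $n^{1-\delta^2}\cdot\mathrm{poly}(d)=o(n)$, is enough for \emph{every} constant $\delta>0$. The two ideas your proposal is missing are precisely these: reading orthogonality off individual matching edges rather than off the aggregate cost, and the sampling-plus-iteration scheme that turns an additive $\Theta(n)$-error oracle into an exact decision procedure.
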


\begin{theorem}\label{theorem:asymEMD}
For any $d(n) = \omega(\log n)$ and $\eta = 1/\omega(\log n)$, if $(1 + \eta)$-approximate asymmetric EMD matching can be solved in $d(n)$ dimensions in truly subquadratic time, then the Hitting Set Conjecture is false.
\end{theorem}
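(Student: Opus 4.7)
The plan is to reduce Hitting Set to $(1+\eta)$-approximate asymmetric EMD matching via a gadget yielding a constant multiplicative gap, which suffices because any $\eta = 1/\omega(\log n)$ is eventually smaller than every positive constant. Given an HS instance $A, B \subseteq \{0,1\}^d$ with $|A| = |B| = n$ and $d = \omega(\log n)$, I would build an asymmetric EMD instance $(A', B')$ in $d' = O(d)$ dimensions of total size $|A'| + |B'| = \tilde O(n)$, such that the optimal matching cost in the HS = YES case is at least $(1+c)$ times the one in the HS = NO case for some absolute constant $c > 0$.

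For the gadget, each $a_i \in A$ would become a ``witness'' point $u_i \in A'$ encoding $a_i$ together with a short slot label in additional heavily-scaled coordinates. Each $b_j \in B$ would contribute a handful of shared copies in $B'$ with coordinates chosen so that the distance from $u_i$ to any such copy is (nearly) $0$ iff $a_i \cdot b_j = 0$ and a fixed positive amount otherwise. Each $u_i$ would also have a private ``dummy'' point in $B'$, accessible only via slot labels, at a carefully chosen distance $c_0$. The asymmetric flexibility $|B'| > |A'|$ lets the optimal matching leave many $b_j$-copies and dummies unused.

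The two-case analysis would argue: in the HS = NO case every $u_i$ has an orthogonal $b_j$, and with sufficiently many $b_j$-copies a Hall-condition-style argument routes almost all $u_i$'s onto orthogonal copies at cost $0$, with only a bounded fraction forced onto dummies, giving total cost at most $S$; in the HS = YES case, the hitting $u_i$ has no orthogonal anywhere and must pay at least $c_0$, and a counting argument pushes the total cost to $\geq (1+c) S$. For $\eta < c$ (which holds for all large $n$ when $\eta = 1/\omega(\log n)$), the assumed $(1+\eta)$-approximate matching algorithm applied to the $\tilde O(n)$-size instance thus distinguishes YES from NO in $\tilde O(n^{2-\epsilon})$ time, contradicting HSC.

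The central difficulty will be controlling the instance size while maintaining the gap: the naive ``one slot per $a_i$ with $n$ private copies of each $b_j$'' construction produces $|B'| = \Theta(n^2)$ and a useless reduction. Sharing $b_j$-copies across slots is necessary to reach $|B'| = \tilde O(n)$, but then many $u_i$'s may compete for the same orthogonal $b_j$, and adversarial HS = NO instances that concentrate orthogonals on few $b_j$'s can wipe out the gap. Overcoming this --- most plausibly by a carefully chosen multiplicity of $b_j$-copies combined with either a Hall-style bound on the number of $u_i$'s forced onto dummies or a randomized preprocessing that spreads orthogonalities --- is the main technical step, and is where the asymmetric structure gives more leverage than in the symmetric case of Theorem~\ref{theorem:approxEMD}.
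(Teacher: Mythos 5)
Your proposal takes a fundamentally different route from the paper's, and it has a fatal geometric obstruction at its core. You propose to build a gadget in which orthogonal pairs $(a_i, b_j)$ map to points at distance (nearly) $0$ while non-orthogonal pairs map to points at distance $\geq c_0$, so that the HS-YES and HS-NO instance costs are separated by a constant multiplicative factor. But no such Euclidean embedding exists. If $a_1 \perp b_1$, $a_1 \perp b_2$, and $a_2 \perp b_1$ while $a_2 \not\perp b_2$, the triangle inequality forces $\norm{u_2 - w_2} \leq \norm{u_2 - w_1} + \norm{w_1 - u_1} + \norm{u_1 - w_2} \approx 0$, contradicting the requirement that the non-orthogonal pair $(a_2, b_2)$ be far apart. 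Your ``heavily-scaled slot labels'' do not escape this: either the labels isolate each $u_i$ to a private set of $b_j$-copies (forcing $|B'| = \Theta(n^2)$, which you correctly identify as useless), or multiple $u_i$'s share label space with a $b_j$-copy, in which case those $u_i$'s are themselves forced to within $O(1/n)$ of each other by the same triangle-inequality argument, and a distance to any shared copy can no longer distinguish their orthogonality patterns. In any embedding of $\{0,1\}^d$-style inner products into squared distances, a change of $1$ in $a \cdot b$ shifts $\norm{\psi(a) - \chi(b)}^2$ by $\Theta(1)$, while the squared distances themselves are $\Theta(d)$; the achievable relative gap per pair is thus $\Theta(1/d)$, and since HSC requires $d = \omega(\log n)$, the best achievable $\eta$ is $1/\omega(\log n)$ --- which is exactly what the theorem states, and not a coincidence.

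The paper instead chains together reductions: a $(1+\eta)$-approximate asymmetric EMD matching algorithm is converted, via the $\phi_1, \phi_2$ encoding of Lemma~\ref{lemma:negateproduct} and the normalization plus dummy vector $v$ of Lemma~\ref{lemma:emdmom}, into a Maximum Orthogonal Matching algorithm with additive error $O(\eta |A| d)$; this is fed through Theorem~\ref{thm:findov} (the sampling preprocessing that handles concentrated orthogonalities --- essentially the ``randomized preprocessing'' you wave at) to get approximate Find-OV, and then through the multiplicative-weights-style recursion of Theorem~\ref{thm:hs} to solve Hitting Set. The proof's key quantitative trick, which your sketch has no analogue of, is to set $d = \min(d', \sqrt{(\log n)/\eta})$: this keeps $d = \omega(\log n)$ so HSC still applies, while making $\eta d = \sqrt{\eta \log n} = o(1)$ so the accumulated additive error $O(\eta n d) = o(n)$ fits within the $n/2$ budget of Theorem~\ref{thm:hs}. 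You should replace the direct gap-gadget plan with this reduction chain, or at minimum recognize that any gadget-based argument must settle for a $\Theta(1/d)$-per-pair gap and then control the ensuing $\tilde O(\eta n d)$ additive slack, which is where the real work lies.
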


Finally, motivated by the question of how hard Hitting Set really is, compared to Orthogonal Vectors, we generalize the result that Hitting Set reduces to Orthogonal Vectors by finding a set of approximation problems that lie between Orthogonal Vectors and Hitting Set in difficulty. For a positive integer function $k(n) \leq n/2$, we define the \textit{$(k, 2k)$-Find-OV problem}: given two sets $A, B \subseteq \{0,1\}^{d(n)}$ with $|A| = |B| = n$ and the guarantee that there exist at least $2k$ orthogonal pairs between $A$ and $B$, find $k$ pairs $\{(a_i, b_i)\}_{i=1}^k$ such that $a_i \cdot b_i = 0$ for every $i$.

We prove the following theorem in Appendix~\ref{section:k2kfindov}.

\begin{theorem}
Let $k(n) \leq n/2$. If $(k, 2k)$-Find-OV can be solved in truly subquadratic time, then the Hitting Set conjecture is false.
\end{theorem}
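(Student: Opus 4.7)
The plan is to reduce Hitting Set to $(k, 2k)$-Find-OV via iterative peeling. Given a Hitting Set instance $(A, B)$ with $|A|=|B|=n$, maintain a working set $A_t \subseteq A$ (initialized to $A$) of candidate hitting-set elements. At each iteration, form a $(k, 2k)$-Find-OV instance on $A_t$ and $B$, padded with dummies to an appropriate size $N$ so that the oracle's parameter becomes $k(N)$. Call the oracle, and for each returned pair $(a,b)$ with $a \cdot b = 0$, delete $a$ from $A_t$: since a hitting-set element has no orthogonal partner in $B$, it is never returned, so the invariant ``$A_t$ contains every hitting-set element'' is preserved throughout. After $O(n/k)$ iterations $A_t$ shrinks to $O(k+h)$ vectors where $h$ is the number of true hitting-set elements, and the residual is finalized either by brute force or by random sampling depending on whether $h$ is small or large.

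For the $\geq 2k$ promise, observe that the number of orthogonal pairs in $A_t \times B$ is at least the number of non-hitting-set vectors currently in $A_t$, so the precondition is automatic as long as $A_t$ still has $\geq 2k$ non-HS candidates. The dummy gadget I would use adds a few ``shield'' coordinates so that dummies are never orthogonal to any original vector (preserving the hitting-set structure of $A$) while a controllable number of dummy-dummy orthogonal pairs can be inserted to boost the count when it falls short. Each returned pair can be verified in $O(d)$ time, so a violated promise is detected; and the iteration at which the promise first fails gives a lower bound on $h$ that suffices to decide the HS instance. For very small $k(n)$ where the iterative savings vanish, the $(k,2k)$-Find-OV oracle already solves Orthogonal Vectors via a simple wrapper that adds guaranteed-orthogonal dummies and uses binary search, after which the known reduction from Hitting Set to Orthogonal Vectors cited in the introduction closes the argument.

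The main obstacle is that the oracle is free to return $k$ pairs all sharing a single $a$-coordinate, which would shrink the per-call progress from $k$ non-HS certifications down to $1$ and blow up the iteration count. To sidestep this I would apply a hashing preprocessing step: partition $B$ randomly into buckets of a chosen size and add indicator coordinates so that each $a \in A$ is, in effect, orthogonal to vectors in only one bucket per run, capping its orthogonal out-degree by a constant; then run $O(\log n)$ independent hashed copies in parallel so that every non-HS element is caught in at least one copy with high probability. The delicate step is designing this hashing gadget so it simultaneously caps out-degrees, preserves the hitting-set membership of the original $A$ exactly, and keeps the $\geq 2k$ promise satisfied at every iteration. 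With this in place, the total running time $\tilde O\bigl((n/k) \cdot N^{2-\epsilon} + knd\bigr)$ is truly subquadratic in $n$, refuting the Hitting Set Conjecture.
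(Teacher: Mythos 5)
Your overall framing---iterative peeling, maintaining the invariant that hitting-set vectors are never deleted, and boosting the pair count with dummies to satisfy the promise---is a reasonable starting point, and you correctly identify the central obstacle (the oracle may return $k$ pairs sharing one $a$, collapsing per-call progress). But the fix you propose and the resulting time bound both have real gaps.

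\textbf{The hashing step does not cap out-degree as claimed.} If $B$ is hashed into $m$ buckets and $a$ is assigned to one bucket, then a vector $a$ with $\deg(a)=1$ is orthogonal to something in its own bucket with probability only $1/m$; catching such vectors with high probability then needs $\Omega(m\log n)$ independent hash runs, not $O(\log n)$. Conversely, to cap the effective out-degree of a vector with $\deg(a)=\Theta(n)$ at a constant you need $m=\Theta(n)$. These two requirements are incompatible: no single choice of $m$ (and $O(\log n)$ repeats) simultaneously caps the degree of dense vectors and catches the sparse ones. The paper instead samples $n^{1-\alpha}$ vectors of $B$ per $a$ to \emph{remove} all $a$ with $\deg(a)\gtrsim n^{2\alpha}$; after this step the out-degree is bounded by $n^{2\alpha}$, which is not constant, and the analysis is done by counting \emph{eliminated pairs} (each successful oracle call kills at least $k$ distinct pairs even when the $a$'s coincide), not by counting marked $a$'s. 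That pair-counting viewpoint is what makes the accounting go through without any degree-capping gadget.

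\textbf{The running time is not subquadratic as written.} You call the oracle on instances of size $N\approx n$ and claim $O(n/k)$ iterations, giving $(n/k)\cdot n^{2-\epsilon}=n^{3-\epsilon}/k$, which is superquadratic unless $k=\omega(n^{1-\epsilon})$. For the interesting regime $k=n^{\gamma}$ with $\gamma<1$ this fails. The paper's crucial additional idea, which your proposal is missing, is to partition both $A$ and $B$ into roughly $n^{1/3}$ blocks and run the $(k',2k')$-Find-OV oracle with $k'=\sqrt{n/\text{blocks}}$ on each $(A_i,B_j)$ block separately; this shrinks the instance size fed to the oracle to $n/k$ rather than $n$, so each call costs $(n/k)^{2-\epsilon}$ instead of $n^{2-\epsilon}$, and the bound on total pairs from the sampling step controls the number of successful calls. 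Your fallback for small $k$ (reduce to OV by duplication, then use HS $\le$ OV) also loses a factor of roughly $n^{\delta}$ in the duplication step for $k=n^{\delta}$, so it only yields a subquadratic OV algorithm when $\epsilon>\delta$; it does not cover the intermediate range of $k$ where the peeling bound is also too weak. In short, the proposal needs both a workable replacement for the hashing gadget (the paper's sampling plus pair-counting suffices) and the block decomposition of the instance before the running time goes subquadratic.
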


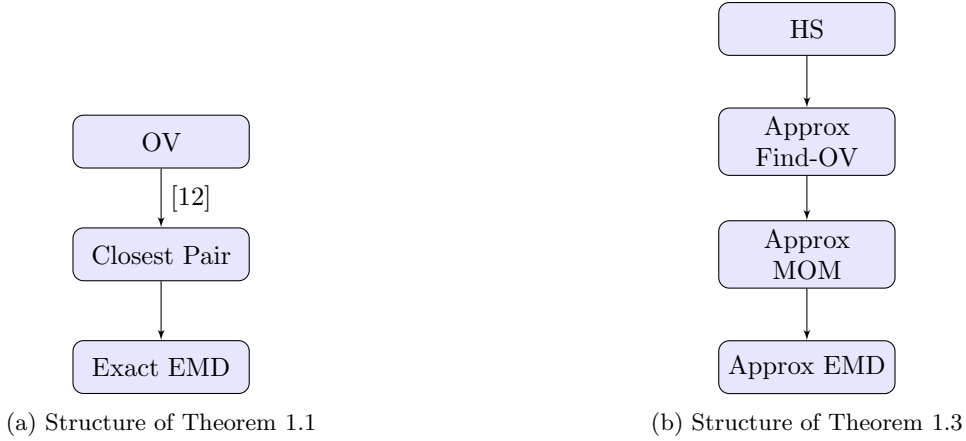
\begin{figure}
\centering
\begin{subfigure}[t]{0.48\textwidth}
\centering
\begin{tikzpicture}[ node distance = 1.5cm, auto]
	\node [block] (ov) {OV};
	\node [block, below of=ov] (closest) {Closest Pair};
	\node [block, below of=closest] (exact) {Exact EMD};

	\path [line] (ov) -- node {\cite{Chen2018}}(closest);
	\path [line] (closest) -- (exact);
\end{tikzpicture}

\caption{Structure of Theorem~\ref{theorem:exactEMD}}
\end{subfigure}\hfill
\begin{subfigure}[t]{0.48\textwidth}
\centering
\begin{tikzpicture}[ node distance = 1.5cm, auto]
	\node [block] (hs) {HS};
	\node [block, below of=hs] (afov) {Approx Find-OV};
	\node [block, below of=afov] (amom) {Approx MOM};
	\node [block, below of=amom] (aemd) {Approx EMD};

	\path [line] (hs) -- (afov);
	\path [line] (afov) -- (amom);
	\path [line] (amom) -- (aemd);
\end{tikzpicture}
\caption{Structure of Theorem~\ref{theorem:approxEMD}}
\end{subfigure}
\caption{Summary of reductions}
\label{figure:summary}
\end{figure}

See Figure~\ref{figure:summary} for an overview of the structure of our main results (Theorems~\ref{theorem:exactEMD} and \ref{theorem:approxEMD} respectively; the proof of Theorem~\ref{theorem:asymEMD} has the same structure as the latter). We provide the remaining definitions of the relevant problems in the next section.

\section{Preliminaries}

Before diving into the reductions, we formally define the remainder of the problems which we're studying. Each problem we study takes sets of vectors as input, so one parameter of a problem is the dimension $d$, which is a function of the input size $n$. That is, every function $d: \mathbb{N} \to \mathbb{N}$ defines a $d(n)$-dimensional EMD problem, and a $d(n)$-dimensional OV problem, and so forth. We gloss over this choice of $d$ in the subsequent definitions.

\subsection{Earth Mover Distance}

The \textit{Earth Mover Distance (EMD)} problem is defined as follows: given two sets $A, B \subseteq \mathbb{R}^{d(n)}$ with $|A|=|B|$, find $$\min_{\pi: A \to B} \sum_{a \in A} \norm{a - \pi(a)}_2$$ where $\pi$ is a bijection. We'll restrict our attention to the special cases where $A, B \subseteq \mathbb{Z}^{d(n)}$ with polynomially bounded entries (for hardness of exact EMD) and $A, B \subseteq \{0,1\}^{d(n)}$ (for hardness of approximate EMD).

We can define the \textit{asymmetric EMD} problem as above, except we relax the constraint $|A| = |B| = n$ to $|A| \leq |B| = n$, and require $\pi$ to be a injection rather than a bijection.

The \emph{EMD matching} problem is the variant of the EMD problem in which the desired output is the optimal matching $\pi$. Similarly we can define the \emph{asymmetric EMD matching} problem. An algorithm ``solves'' EMD matching (or its asymmetric variant) up to a certain additive or multiplicative factor if the cost of the bijection it outputs differs from the optimal cost by at most that additive or multiplicative factor.

\subsection{Variants of Orthogonal Vectors}

The reduction from Hitting Set to approximate EMD matching will go through the variants of OV defined next.

The \textit{Maximum Orthogonal Matching (MOM)} problem is defined as follows: given two sets $A, B \subseteq \{0,1\}^{d(n)}$, with $|A| \leq |B| = n$, find an injection $\pi: A \to B$ which maximizes $$|\{a \in A \mid a \cdot \pi(a) = 0\}|.$$

And the $\textit{Find-OV}$ problem is defined as follows: given two sets $A, B \subseteq \{0,1\}^{d(n)}$ with $|A| = |B| = n$, find the set $S \subseteq A$ of vectors $a \in A$ such that there exists some $b \in B$ with $a \cdot b = 0$. An algorithm solves Find-OV up to an additive error of $t$ if it returns a set $S' \subseteq S$ for which $|S'| \geq |S| - t$.

\subsection{Relevant prior work}

We will apply the following theorem from \cite{Chen2018} to our low-dimensional hardness result of exact EMD:

\begin{theorem}[\cite{Chen2018}]\label{thm:chen}
Assuming OVC, there is a constant $c>0$ such that Bichromatic $\ell_2$-Closest Pair in $c^{\log^* n}$ dimensions requires $n^{2 - o(1)}$ time, with vectors of $O(\log n)$ bit entries.
\end{theorem}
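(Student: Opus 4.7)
The plan is to reduce $d_0$-dimensional OV, with $d_0 = \omega(\log n)$, to Bichromatic $\ell_2$-Closest Pair in dimension $c^{\log^* n}$. I would decompose this into two tools: (a) a direct, dimension-preserving reduction from OV to Bichromatic Closest Pair, and (b) a dimension-reducing self-reduction for OV that maps an $(n,d)$-instance to an $(n^{1+o(1)}, O(\log d))$-instance. I would iterate (b) roughly $\log^* n$ times and then apply (a) once.

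For (a), I would use the identity $\|a - b\|_2^2 = \|a\|_1 + \|b\|_1 - 2(a \cdot b)$ for $a, b \in \{0,1\}^d$: orthogonality corresponds exactly to achieving squared distance $\|a\|_1 + \|b\|_1$. Bucketing $A$ and $B$ by Hamming weight yields at most $(d+1)^2$ sub-instances, and an orthogonal pair exists iff within the appropriate pair of buckets the Closest Pair achieves the predicted distance. This adds only a $\mathrm{poly}(d)$ overhead in runtime and preserves the dimension up to an additive constant (e.g., adjoining a coordinate recording $\|a\|_1$).

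For (b), I would use a polynomial-method encoding: represent each vector $a \in \{0,1\}^d$ by the evaluations of a low-degree polynomial (derived from $a$, e.g., a tensor-product or Reed--Muller-style encoding) at a small set of carefully chosen points. The encoding is designed so that the orthogonality of $a$ and $b$ can be detected from their new $O(\log d)$-dimensional summaries, at the cost of a $\mathrm{poly}(d) = n^{o(1)}$ blowup in the number of vectors. Starting from $d_0 = \omega(\log n)$, iterating this reduction $k$ times gives dimension $d_k$ shrinking as $d_k = O(\log d_{k-1})$, so after $\log^* d_0 = \log^* n + O(1)$ iterations the dimension is bounded by some constant raised to the $\log^* n$-th power, while the total number of vectors remains $n^{1+o(1)}$ and all entries remain $O(\log n)$-bit. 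A truly subquadratic Closest Pair algorithm in that dimension would then solve the original $d_0$-dimensional OV in $n^{2 - \Omega(1)}$ time, contradicting OVC.

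The main obstacle is step (b): achieving logarithmic contraction per iteration. Off-the-shelf polynomial-method tricks typically shrink $d$ only to $\sqrt{d}$ or $d / \log d$, which would never reach constant-like dimensions in $\log^* n$ rounds. One therefore needs a tighter algebraic encoding---plausibly a tensor-product/Chinese-Remainder-style construction tuned so that the encoding dimension is $O(\log d)$ and the point-count blowup per step is at most $n^{o(1)/\log^* n}$, so that the product of blowups across all $\log^* n$ iterations stays $n^{o(1)}$. Getting both of these bounds simultaneously---and carrying the analysis through the iteration without dependence on $n$ exploding---is presumably where the main technical content of \cite{Chen2018} lies.
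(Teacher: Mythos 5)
This theorem is not proved in the paper at all: it is stated with the attribution ``[\cite{Chen2018}]'' in its header, and the preceding text reads ``We will apply the following theorem from \cite{Chen2018}.'' The paper uses it strictly as a black box in the proof of Theorem~\ref{theorem:exactEMD}. So there is no in-paper proof to compare your proposal against; what you have written is an attempt to reconstruct Chen's argument from scratch, which is a different exercise.

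Taken on its own terms, your sketch correctly guesses the overall shape (an iterated dimension reduction, with $\log^* n$ rounds producing the $c^{\log^* n}$ bound), but it has two concrete problems. First, step (a) as written does not reduce OV to \emph{Closest} Pair. Within a Hamming-weight bucket where every $a$ has weight $w_A$ and every $b$ has weight $w_B$, one has $\|a-b\|_2^2 = w_A + w_B - 2(a \cdot b)$, so an orthogonal pair is a \emph{farthest} pair in that bucket, not a closest one, and ``Closest Pair achieves the predicted distance'' is neither necessary nor sufficient for an orthogonal pair to exist (a single non-orthogonal pair can make the closest-pair distance strictly smaller than $\sqrt{w_A+w_B}$ regardless of whether an orthogonal pair is present). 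You need to first flip the inner product---for instance with a complementation gadget such as $\phi_1,\phi_2$ from Lemma~\ref{lemma:negateproduct} of this paper, which sends $a\cdot b$ to $d - a\cdot b$---so that orthogonality corresponds to the \emph{minimum} distance; only then does bucketing by weight give a clean reduction. Second, and more fundamentally, step (b)---a self-reduction shrinking the dimension from $d$ to $O(\log d)$ with only $n^{o(1)/\log^* n}$ blowup per round, while preserving exactness and keeping entries $O(\log n)$-bit---is precisely where all the technical content of \cite{Chen2018} lives, and you do not supply it. You flag this yourself, which is honest, but it means the proposal is an outline of a strategy rather than a proof: the existence of such a contraction step is exactly what one would need to establish, and asserting that some ``tensor-product/Chinese-Remainder-style construction'' should work does not close the gap.
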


\section{Exact EMD in low dimensions}

To prove hardness of the exact EMD problem under the Orthogonal Vectors Conjecture, we reduce to the bichromatic closest pairs problem, and then apply Theorem~\ref{thm:chen} due to \cite{Chen2018}. The intuition for the reduction is as follows: given two sets $A$ and $B$ of $n$ vectors, we'd like to augment set $A$ with $n-1$ copies of a vector that is equidistant from all of $B$, and much closer to $B$ than $A$ is. Similarly, we'd like to augment set $B$ with $n-1$ copies of a vector that is equidistant from all of $A$, and much closer to $A$ than $B$ is. If this were possible, then the minimum cost matching between the augmented sets would only match one pair of the original sets: the desired closest pair.

Unfortunately, it is in general impossible to find a vector equidistant from $n$ vectors in $d \ll n$ dimensions. But this can be circumvented by embedding the vectors in a slightly higher-dimensional space, and adjusting coordinates in the ``free'' dimensions to ensure that an equidistant vector exists. So long as the free dimensions used to adjust set $A$ are disjoint from the free dimensions used to adjust set $B$, the inner products between $A$ and $B$ are unaffected, and the distances change in an accountable way.

Since we are working in the $\ell_2$ norm, we will need the following simple lemma which shows that any integer can be efficiently decomposed as a sum of a constant number of perfect squares.

\begin{lemma}\label{lemma:squaredecomposition}
For any $\rho > 0$ and any positive integer $m$, there is an $O(m^\rho)$ time algorithm to decompose $m$ as a sum of $O(\log 1/\rho)$ perfect squares. 
\end{lemma}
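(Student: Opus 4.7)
The plan is a two-phase algorithm: a greedy ``subtract the largest square'' reduction, followed by a constant-square brute-force cleanup, with the greedy phase tuned so that the cleanup just barely fits the $O(m^\rho)$ budget.

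First I would iterate the greedy step: set $n_0 = m$ and $n_{i+1} = n_i - \lfloor \sqrt{n_i} \rfloor^2$. Since $(\lfloor \sqrt{n_i} \rfloor + 1)^2 > n_i$, we have $n_{i+1} \leq 2\sqrt{n_i}$, and a straightforward induction gives $n_j \leq 4 \cdot m^{1/2^j}$. Choosing $j = \lceil \log_2(3/\rho) \rceil$ suffices to drive the remainder down to $n_j \leq m^{2\rho/3}$ for all sufficiently large $m$. Each greedy step only requires one integer square root, so this phase costs $O(\log m \cdot \log(1/\rho))$ time and contributes $j = O(\log(1/\rho))$ squares.

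Second, I would finish by a four-square brute force, invoking Lagrange's four-square theorem to guarantee that $n_j = a^2 + b^2 + c^2 + d^2$ for some nonnegative integers $a, b, c, d \leq \sqrt{n_j}$. Enumerate all triples $(a,b,c) \in [0,\sqrt{n_j}]^3$ and test whether $n_j - a^2 - b^2 - c^2$ is a nonnegative perfect square. This cleanup contributes four additional squares and runs in $O(n_j^{3/2})$ time, which is $O(m^\rho)$ by our choice of $j$. Combining the two phases yields a decomposition into $j + 4 = O(\log(1/\rho))$ perfect squares in total time $O(m^\rho)$.

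The main subtlety is balancing the two phases. Pure greedy would require $\Omega(\log\log m)$ squares to drive the remainder below a constant, which is too many; on the other hand, a direct brute-force four-square search on $m$ itself costs $\Omega(m^{3/2})$, which is much too slow. The trick is that each greedy square roughly square-roots the remainder, so $\Theta(\log(1/\rho))$ greedy squares drop $m$ down to $m^{\Theta(\rho)}$, at which point four more squares can be found by brute force within the allotted time. Thus the number of greedy iterations and the size budget of the brute force are both dictated by the same quantity $\log(1/\rho)$, which is exactly why the lemma achieves $O(\log(1/\rho))$ squares in $O(m^\rho)$ time.
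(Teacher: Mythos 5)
Your proof is correct and follows essentially the same approach as the paper: a greedy phase that repeatedly subtracts the largest square below the remainder, square-rooting the remainder each step so that $O(\log(1/\rho))$ iterations reduce $m$ to $m^{\Theta(\rho)}$, followed by a constant-square cleanup that exists by Lagrange's four-square theorem. The only cosmetic difference is that you find the final four squares by direct enumeration over triples in $O(n_j^{3/2})$ time, whereas the paper computes a minimal square decomposition of the remainder by dynamic programming; both fit in the $O(m^\rho)$ budget and the structure of the argument is the same.
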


\begin{proof}
Here is the algorithm: repeatedly find the largest square which does not push the total above $m$, until the remainder does not exceed $O(m^{\rho/2})$. Then compute the minimal square decomposition for the remainder by dynamic programming.

The first, greedy phase takes $O(\polylog(m))$ time and finds $O(\log 1/\rho)$ perfect squares which sum to some $m'$ with $m - m^{\rho/2} \leq m' \leq m$. The second, dynamic programming phase takes $O(m^\rho)$ time (even naively). By Lagrange's four-square theorem, a decomposition of $m - m'$ into at most four perfect squares is found.
\end{proof}

Now we describe the main reduction of this section. We'll use a shorthand notation to define vectors more concisely: for example, $a^x b^y c^z$ refers to an $(x+y+z)$-dimensional vector with value $a$ in the first $x$ dimensions, $b$ in the next $y$ dimensions, and $c$ in the next $z$ dimensions.

\begin{theorem}\label{theorem:EMDCP}
Let $d = d(n) \leq n$ be a dimension, and let $k > 0$ be a constant. There is a constant $c = c(k)$ for which the following holds. Suppose that there is an algorithm which computes the $\ell_2$ earth mover distance between sets $A', B' \subseteq [1, n^{16k}]^{2d + 2c + 2}$ of size $n$ in $O(n^{2-\epsilon})$ time. Then bichromatic closest pair between sets $A, B \subseteq [1, n^k]^d$ of size $n$ can be computed in $O(n^{2-\epsilon})$ time as well.
\end{theorem}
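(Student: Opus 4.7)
The plan is to construct, in time $o(n^{2-\epsilon})$, augmented sets $A^+, B^+$ of size $2n-1$ in dimension $D = 2d + 2c + 2$ with integer entries in $[1, n^{16k}]$, so that a single call to the hypothesized EMD subroutine on $(A^+, B^+)$ yields the BCP distance of $(A, B)$ after a known post-processing step; the running time is $O((2n-1)^{2-\epsilon}) = O(n^{2-\epsilon})$. Following the intuition paragraph, I adjoin $n-1$ copies of a ``sink'' $v_A$ to an embedding of $A$ and $n-1$ copies of a sink $v_B$ to an embedding of $B$, chosen so that $v_A$ is exactly equidistant from every embedded $b$-vector at distance $d_A$ (and much closer to $B$-embeddings than any $A$-embedding is to any $B$-embedding), and symmetrically for $v_B$. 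Any bijection $\pi : A^+ \to B^+$ realizes some $k \geq 1$ original-to-original pairs; then $n-k$ originals in each direction must match sinks, leaving $k - 1$ sink-to-sink pairs. Using the two equidistances, the cost simplifies to
\[
\sum_{i=1}^{k}\|a_i^\dagger - b_i^\dagger\| + (n - k)(d_A + d_B) + (k - 1)\|v_A - v_B\|.
\]
Calibrating parameters so that $\min_{a,b}\|a^\dagger - b^\dagger\| + \|v_A - v_B\| > d_A + d_B$ forces $k = 1$ to be optimal, and the unique $A$-$B$ pair the optimal matching picks is the BCP pair.

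The augmented dimension $D = 2d + 2c + 2$ is partitioned into an ``$A$-original'' block of $d$ coordinates, a disjoint ``$B$-original'' block of $d$ coordinates, an ``$A$-padding'' block of $c$ coordinates, a disjoint ``$B$-padding'' block of $c$ coordinates, and two ``tag'' coordinates. Each $a \in A$ embeds with data only in its $A$-original block, its $A$-padding block, and its tag coordinate; symmetrically for $b \in B$. The disjointness of the padding blocks (and of the two original blocks) is exactly the ``disjoint free dimensions keep $A$-$B$ inner products unaffected'' property from the intuition, making $\|a^\dagger - b^\dagger\|^2$ decompose as a clean block-wise sum with no cross terms. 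To make $v_A$ equidistant from every $b^\dagger$, I fix a large integer $M$ and choose each $\mathbf{p}_b \in \mathbb{Z}^c$ so that $\|\mathbf{p}_b\|^2 = M - h(b)$ for a $b$-dependent nonnegative integer $h(b)$ (essentially $\|b\|^2$ plus the contributions of the reference points of $v_A$); Lemma~\ref{lemma:squaredecomposition}, applied with a small constant $\rho$ and $c = O(\log 1/\rho)$ perfect squares, constructs such a $\mathbf{p}_b$ in time $O(n^{O(k\rho)})$ per vector. The analogous recipe produces $\mathbf{p}_a$ for $v_B$'s equidistance, and the disjointness of the padding blocks means the two recipes do not interfere. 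A ``BCP-encoding'' step reserves two padding slots on each side to store $-\|a\|^2$ and $-1$ (respectively $\|b\|^2$ and $1$) at matching scale, exploiting the identity $(2a,-\|a\|^2,-1)\cdot(b,1,\|b\|^2) = -\|a-b\|^2$; combined with the equidistance padding, this makes $\|a^\dagger - b^\dagger\|^2$ a strictly monotone, explicitly invertible function of $\|a - b\|^2$. The BCP distance is then recovered from the EMD value by subtracting the known constant $(n-1)(d_A + d_B)$ and inverting this monotone map.

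The chief obstacle is securing the inequality $d_A + d_B < \min\|a^\dagger-b^\dagger\| + \|v_A - v_B\|$ that forces $k = 1$: with sinks placed at the zero vector of the ``wrong'' block, a direct calculation shows $d_A + d_B$ can exceed $\min\|a^\dagger - b^\dagger\|$, since $v_A$ equidistant from $B^\dagger$ must sit at least at the circumradius of that cluster. Fixing this requires more careful sink placement---for instance putting $v_A$ inside the $B$-original block near a suitable reference, with the $B$-padding rebalanced by Lemma~\ref{lemma:squaredecomposition} to restore equidistance---so that $v_A$ becomes genuinely close to the $B$-embeddings and far from the $A$-embeddings. Then $M$, the reference points, the tag magnitudes, and the BCP-encoding scaling are tuned simultaneously: taking $M$ polynomially bounded by a power of $n^k$, $\rho$ a small absolute constant (so $c$ is a fixed constant depending only on $k$), tags of magnitude $n^{O(k)}$, and finally shifting every coordinate by $+1$ to move from $[0,n^{16k}]$ into $[1,n^{16k}]$ (distances are invariant under translation), satisfies all constraints within the declared bound. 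The preprocessing runs in $O(n^{1+o(1)})$ time, dominated by the $O(n)$ calls to Lemma~\ref{lemma:squaredecomposition}, which is negligible compared to the $O(n^{2-\epsilon})$ EMD call.
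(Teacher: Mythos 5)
Your high-level template matches the paper's: adjoin $n-1$ sinks to each side, derive the cost formula with $k$ original-to-original edges, force $k=1$ by the inequality $d_A + d_B < \min\|a^\dagger - b^\dagger\| + \|v_A - v_B\|$, and use Lemma~\ref{lemma:squaredecomposition} to place each sink equidistant from the opposite side's embeddings. Your remark about sink placement (putting $v_A$ ``inside the $B$-original block'' so it is genuinely close to the $B$-embeddings and far from the $A$-embeddings) is also correct and is exactly what the paper does via the $N^d$ versus $0^d$ separator block.

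However, the embedding you describe has a genuine flaw. You place $a$ into a $d$-coordinate ``$A$-original'' block and $b$ into a \emph{disjoint} $d$-coordinate ``$B$-original'' block, and you emphasize that this yields a ``clean block-wise sum with no cross terms.'' But the absence of cross terms is fatal: if $a^\dagger$ and $b^\dagger$ have disjoint supports, then $\|a^\dagger - b^\dagger\|_2^2 = \|a^\dagger\|_2^2 + \|b^\dagger\|_2^2$ decomposes as (a function of $a$) plus (a function of $b$), and no such quantity can equal $c_1\|a-b\|_2^2 + c_2$, because $\|a-b\|_2^2$ contains the bilinear term $-2a\cdot b$ which depends jointly on $a$ and $b$. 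Your BCP-encoding step cannot rescue this: the identity $(2a,-\|a\|^2,-1)\cdot(b,1,\|b\|^2) = -\|a-b\|^2$ holds only because the $2a_i$ and $b_i$ coordinates \emph{overlap}; if the encoding slots for the $A$-side and $B$-side are disjoint, that inner product is identically zero. There is also a secondary issue: the entries $-\|a\|^2$ are as negative as $-n^{2k}d$, so the promised ``shift by $+1$'' does not bring them into $[1,n^{16k}]$, and shifting by a larger constant must be applied identically to the matching coordinates on both sides to preserve distances---a constraint you would also need to check.

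The paper avoids all of this by putting $a$ (for $a\in A$) and $b$ (for $b\in B$) in the \emph{same} trailing $d$ coordinates: $f(a) = 0^d\,(\text{adj}_a)\,0^{c+1}\,a$ and $g(b) = N^d\,0^{c+1}\,(\text{adj}_b)\,b$. Then $\|f(a)-g(b)\|_2^2 = N^2 d + \|\text{adj}_a\|_2^2 + \|\text{adj}_b\|_2^2 + \|a-b\|_2^2$, and the cross term comes for free. The phrase ``disjoint free dimensions keep $A$-$B$ inner products unaffected'' in the intuition refers only to the \emph{padding} dimensions ($\text{adj}_a$ and $\text{adj}_b$ occupy disjoint blocks, so adding them does not perturb $a\cdot b$), not to the original data block. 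If you revise your layout so that the $A$- and $B$-originals share the same coordinates and only the padding and sink-separator blocks are split, the rest of your argument (the matching count, the $k=1$ optimality condition, the use of Lemma~\ref{lemma:squaredecomposition}, and the sink-placement fix) goes through.
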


\begin{proof}
Set $\rho = 1/(16k)$, and let $c = O(\log 1/\rho)$ be the constant in Lemma~\ref{lemma:squaredecomposition} for the number of perfect squares in a decomposition. Let $A$ and $B$ be two sets of vectors from $\{1, \dots, n^k\}^d$. Let $N = n^{16k}$. Our goal is to compute $$\min_{a \in A, b \in B} \norm{a - b}_2.$$ We can assume without loss of generality that $\norm{a}_2^2$ and $\norm{b}_2^2$ are odd for all $a \in A$ and $b \in B$: for instance, we can replace each vector $z = (z_1, \dots, z_d)$ by $(2z_1, \dots, 2z_d, 1)$.

We construct sets $A'$ and $B'$ of $(2d+2c+2)$-dimensional vectors as follows. Let $u = 0^d (10^c) 0^{c+1} 0^d$ (parentheses for clarity). Let $v = N^d 0^{c+1} (10^c) 0^d$. Add $n-1$ copies of $u$ to $B'$ and add $n-1$ copies of $v$ to $A'$. For each $a \in A$, add the following vector to $A'$, where we'll define vector $\text{adj}_a \in \mathbb{Z}^{c+1}$ later: $$a' = f(a) = 0^d (\text{adj}_a) 0^{c+1} a.$$ Similarly, for each $b \in B$, add the following vector to $B'$, where we'll define $\text{adj}_b \in \mathbb{Z}^{c+1}$ later: $$b' = g(b) = N^d 0^{c+1} (\text{adj}_b) b.$$

Now pick any $a \in A$. We'll construct $\text{adj}_a$ so that the following equalities are both satisfied: $$\norm{a' - u}^2_2 = n^{4k}d^2 = \norm{\text{adj}_a}^2_2.$$

Define the first element $\text{adj}_a(0) = (\norm{a}_2^2 + 1)/2$. Since $\norm{a}_2^2 \leq n^{2k}d$, we can then use Lemma~\ref{lemma:squaredecomposition} to find $c$ integers $\text{adj}_a(1),\dots,\text{adj}_a(c)$ so that $\norm{\text{adj}_a}^2_2 = n^{4k}d^2$. Furthermore,
\begin{align*}
\norm{a' - u}_2^2 
&= \norm{\text{adj}_a - 10^c}_2^2 + \norm{a}_2^2 \\
&= \norm{\text{adj}_a}_2^2 - 2 \cdot \text{adj}_a(0) + 1 + \norm{a}_2^2 \\
&= n^{4k}d^2.
\end{align*}

For each $b \in B$, we can similarly construct $\text{adj}_b$ so that $\norm{b' - v}_2^2 = \norm{\text{adj}_b}_2^2 = n^{4k}d^2.$

We claim that $$\text{EMD}(A', B') = 2(n-1)n^{2k}d + \min_{a \in A, b \in B}\sqrt{N^2d + 2n^{4k}d^2 + \norm{a-b}_2^2}.$$

To prove this claim, notice that $\norm{u-v}_2 \geq N\sqrt{d}$ and $\norm{a' - b'}_2 \geq N\sqrt{d}$ for every $a' \in A' \setminus \{v\}$ and $b' \in B' \setminus \{u\}$, whereas $\norm{a' - u}_2 \ll N\sqrt{d}/n$ and $\norm{b' - v} \ll N\sqrt{d}/n$. This means that the optimal matching between $A'$ and $B'$ will minimize the number of $(u,v)$ and $(a',b')$ edges. Hence, exactly one element of $A' \setminus \{v\}$ will be matched to an element in $B' \setminus \{u\}$. So if $M$ denotes this optimal matching, and $x' = f(x) \in A'$ is matched with $y' = g(y) \in B'$, then the cost of $M$ is
\begin{align*}
\text{cost}(M) 
&= \left(\sum_{a' \in A' \setminus \{v, x'\}} \norm{a' - u}_2 + \sum_{b' \in B' \setminus \{u, y'\}} \norm{b' - v}_2\right) + \norm{x' - y'}_2 \\
&= 2(n-1)n^{2k}d + \sqrt{N^2d + \norm{\text{adj}_{x}}_2^2 + \norm{\text{adj}_{y}}_2^2 + \norm{x - y}_2^2} \\
&= 2(n-1)n^{2k}d + \sqrt{N^2d + 2n^{4k}d^2 + \norm{x-y}_2^2}.
\end{align*}
The claim follows. So the algorithm is simply: run the EMD algorithm on $(A',B')$ and use the computed matching cost to find the closest pair distance, according to the above formula.

The time complexity of constructing $A', B'$ is $O(n^{5/4}d^{1/8})$, dominated by computing a square decomposition for each vector. Since $A'$ and $B'$ are sets of $O(n)$ vectors in $\mathbb{Z}^{2d + 2c + 2}$ with entries bounded by $\max(N, n^{2k}d) \leq n^{16k}$, the EMD between $A'$ and $B'$ can be computed in $O(n^{2-\epsilon})$ time. Thus, the overall algorithm takes $O(n^{2-\epsilon})$ time.
\end{proof}

Theorem~\ref{theorem:exactEMD} follows from the above reduction and Theorem~\ref{thm:chen}.

\section{Approximate EMD under the Hitting Set Conjecture}\label{section:approx}

In this section we prove hardness of approximation for the EMD matching problem when the approximately optimal matching must be reported. Note that the techniques from the previous section do not immediately generalize to this scenario, since the reduction in Theorem~\ref{theorem:EMDCP} is not approximation-preserving. A multiplicative error of $1 + \epsilon$ in the EMD algorithm would induce an additive error of $\tilde{O}(\epsilon n^{16k})$ in the closest pair algorithm, due to the large integers constructed in the reduction. A bucketing scheme, to ensure that the diameter of the input point set is within a constant factor of the closest pair, could eliminate the dependence on the values of the input coordinates, yielding a multiplicative error of only $1 + \tilde{O}(\epsilon n)$. 

However, $(1+\epsilon)$-approximate closest pair is only quadratically hard for $\epsilon = o(1)$ \cite{Rubinstein2018}; for any constant $\epsilon > 0$, there is a subquadratic $(1+\epsilon)$-approximation algorithm \cite{Indyk1998, Alman2016}. Thus, the above arguments would only yield $(1 + \tilde{O}(1/n))$-approximate hardness. Furthermore, the factor of $n$ loss intuitively feels intrinsic to the approach of reducing from closest pair, since the EMD is the sum of $n$ distances. Thus, a different approach seems necessary if we are to achieve hardness for $\epsilon = \omega(1/n)$.

Our method broadly consists of two steps. First, we show that EMD can encode orthogonality, by reducing approximate Maximum Orthogonal Matching (the problem of reporting a maximum matching in the implicit graph with an edge for each orthogonal pair) to approximate EMD matching. Second, we show that approximate Maximum Orthogonal matching can solve an instance $(A, B)$ of Hitting Set by finding an orthogonal pair $(a,b)$ for every $a \in A$ if possible, even if the set of orthogonal pairs does not constitute a matching.

We start by proving that asymmetric EMD matching reduces to EMD matching for the appropriate choices of error bounds. The reduction pads the smaller set of vectors $A$ with a vector that is equidistant from the opposite set $B$, so that its contribution to the earth mover distance can be accounted for. Of course, it is first necessary to transform the vectors so that an equidistant vector exists.

\begin{lemma}\label{lemma:symmetrize}
Suppose that $(1+\epsilon)$-approximate EMD matching in $D$ dimensions can be solved in $T(n,D)$ time. Then $(1+\epsilon)$-approximate asymmetric EMD matching in $d$ dimensions can be solved with an additional additive factor of $n\epsilon \sqrt{d}$ in $T(n,2d)$ time.
\end{lemma}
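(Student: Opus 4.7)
The plan is to pad $A$ with dummy vectors equidistant from every embedded element of $B$, so that the cost the dummies contribute to any symmetric EMD matching is a fixed constant independent of which $B$-vectors they get assigned to. Then the restriction of the matching to the non-dummy pairs is an injection $\pi : A \to B$ whose cost essentially equals the total symmetric cost minus that fixed dummy contribution. The main obstacle is that in $d$ dimensions no single vector is equidistant from every $b \in \{0,1\}^d$, since $\norm{u - b}_2$ depends on $\norm{b}_1$; I would resolve this by doubling the dimension and using the embedding $x \mapsto x' := (x, \mathbf{1}-x)$, which normalizes the squared norm of every embedded vector to exactly $d$.

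Concretely, I would set $A' := \{a' : a \in A\} \cup \{\mathbf{0}, \ldots, \mathbf{0}\}$ in $\{0,1\}^{2d}$, with $n-|A|$ copies of the zero vector, and $B' := \{b' : b \in B\}$. Under this embedding $\norm{a' - b'}_2 = \sqrt{2}\,\norm{a-b}_2$ for any $a \in A$, $b \in B$, while $\norm{\mathbf{0} - b'}_2 = \sqrt{d}$ for every $b \in B$. So every perfect matching of $(A', B')$ decomposes into a contribution of $(n-|A|)\sqrt{d}$ from the dummy edges plus $\sqrt{2}$ times the cost of some injection $\pi : A \to B$, and in particular the optimal symmetric EMD cost equals $(n-|A|)\sqrt{d} + \sqrt{2}\cdot\mathrm{OPT}_{\mathrm{asym}}(A,B)$.

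Next, I would run the assumed $(1+\epsilon)$-approximate EMD matching algorithm on $(A', B')$ in $T(n, 2d)$ time and extract $\pi$ from the non-dummy pairs. Subtracting the exact dummy contribution from the returned total cost, then dividing by $\sqrt{2}$, yields $\mathrm{cost}(\pi) \leq (1+\epsilon)\,\mathrm{OPT}_{\mathrm{asym}} + \epsilon(n-|A|)\sqrt{d}/\sqrt{2} \leq (1+\epsilon)\,\mathrm{OPT}_{\mathrm{asym}} + n\epsilon\sqrt{d}$, as required. There is no serious obstacle beyond the complement-embedding trick; the only subtle point is that the equidistance property makes the subtraction of the dummy contribution \emph{exact}, so the multiplicative factor $1+\epsilon$ is preserved and only the small additive term is incurred.
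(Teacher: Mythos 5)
Your proposal is correct and takes essentially the same approach as the paper: the complement embedding $x \mapsto (x, \mathbf{1}-x)$ to normalize all embedded norms to $\sqrt{d}$, padding $A'$ with $|B|-|A|$ copies of $\mathbf{0}$, and then subtracting the exact dummy contribution $(|B|-|A|)\sqrt{d}$ before dividing by $\sqrt{2}$ to recover the injection's cost. The error analysis matches the paper's line for line, so there is nothing to add.
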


\begin{proof}
Let $A, B \subseteq \{0,1\}^{d}$ with $|A| \leq |B|$. Define sets $A', B' \subseteq \{0,1\}^{2d}$ by mapping $a \in A$ to the vector $$(a_1, \dots, a_d, 1 - a_1, \dots, 1 - a_d)$$ and similarly mapping $b \in B$ to $$(b_1,\dots,b_d, 1-b_1, \dots, 1-b_d).$$ Then add $|B|-|A|$ copies of the zero vector to $A'$.

Now $|A'| = |B'|$, so we can run the approximate EMD algorithm on $A'$ and $B'$ to find some bijection $\pi: A' \to B'$ such that $$\sum_{a' \in A'} \norm{a' - \pi(a')}_2 \leq (1 + \epsilon) EMD(A',B').$$

Each vector $b' \in B'$ has $\norm{b'}_2^2 = d$, so the distance from the zero vector to each match is exactly $\sqrt{d}$. And for any $a \in A$ and $b \in B$ which map to $a' \in A'$ and $b' \in B'$, $$\norm{a' - b'}_2^2 = 2\norm{a - b}_2^2.$$ Hence, the cost of $\pi$ is $$\sum_{a' \in A'} \norm{a' - \pi(a')}_2 = (|B| - |A|)\sqrt{d} + \sqrt{2} \cdot \sum_{a \in A} \norm{a - \pi(a)}_2$$ and the optimal cost is $$EMD(A', B') = (|B| - |A|)\sqrt{d} + \sqrt{2} \cdot EMD(A,B).$$ It follows that $$\sum_{a \in A} \norm{a - \pi(a)}_2 \leq \frac{\epsilon}{\sqrt{2}}(|B|-|A|)\sqrt{d} + (1+\epsilon) EMD(A, B),$$ which is the stated error bound.
\end{proof}

Next, we reduce approximate Maximum Orthogonal Matching to approximate asymmetric EMD matching. The general idea, given input sets $(A, B)$, is to deform $A$ and $B$ so that orthogonal pairs $(a,b)$ are mapped to pairs $(a'',b'')$ with distance $d_0$, and all other pairs are mapped to pairs with distance at least $d_1 > d_0$. Then add $|A|$ auxiliary vectors to $B$, each with distance exactly $d_1$ from all vectors in $A$. Thus, in an optimal matching, each vector of $A$ is either matched with an orthogonal vector at distance $d_0$, or some vector with distance exactly $d_1$. This introduces a nonlinearity, ensuring that in the additive matching cost, an orthogonal pair's contribution is not ``cancelled out'' by the contribution of a pair with dot product $2$, for instance. A similar trick was used by \cite{Backurs2015} in the context of edit distance, another ``additive'' metric.

The following simple lemma will be useful:

\begin{lemma}\label{lemma:negateproduct}
There are maps $\phi_1, \phi_2: \{0,1\}^d \to \{0,1\}^{3d}$ such that for any $a, b \in \{0,1\}^d$, $$\phi_1(a) \cdot \phi_2(b) = d - (a \cdot b).$$ Furthermore, the maps can be evaluated in $O(d)$ time.
\end{lemma}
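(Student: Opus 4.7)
The plan is to build $\phi_1$ and $\phi_2$ coordinate-by-coordinate, exploiting the identity
\[
d - a \cdot b = \sum_{i=1}^d (1 - a_i b_i),
\]
so it suffices to realize each term $1 - a_i b_i$ as the inner product of two short $\{0,1\}$ blocks that can then be concatenated. Since $a_i,b_i \in \{0,1\}$, one checks directly that $1 - a_i b_i = (1-a_i)\cdot 1 + a_i \cdot (1-b_i)$, which is already an inner product of two $\{0,1\}$-valued $2$-vectors.

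Concretely, I would partition the $3d$ output coordinates into $d$ blocks of size $3$, one per input coordinate. On block $i$, define
\[
\phi_1(a)\big|_{\text{block }i} = (\,1-a_i,\; a_i,\; 0\,), \qquad \phi_2(b)\big|_{\text{block }i} = (\,1,\; 1-b_i,\; 0\,).
\]
(The third coordinate in each block is a pad so that the output dimension is exactly $3d$, matching the statement; any fixed $\{0,1\}$-valued pad whose product vanishes in every block would do.) The contribution of block $i$ to $\phi_1(a)\cdot \phi_2(b)$ is then $(1-a_i)\cdot 1 + a_i\cdot(1-b_i) + 0 = 1 - a_i b_i$.

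Summing over $i$ gives
\[
\phi_1(a)\cdot\phi_2(b) \;=\; \sum_{i=1}^d (1 - a_i b_i) \;=\; d - a\cdot b,
\]
as required. All entries are manifestly in $\{0,1\}$, and each block depends only on the single input coordinate $a_i$ (resp.\ $b_i$), so both maps can be evaluated in $O(d)$ time by a single pass. There is no real obstacle here; the only thing to be careful about is that the pad coordinate is chosen so its contribution to the inner product is zero, which is why setting it to $0$ in both $\phi_1$ and $\phi_2$ is the cleanest choice.
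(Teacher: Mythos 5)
Your proof is correct and uses the same block-wise strategy as the paper: realize each term $1-a_ib_i$ as an inner product of length-$3$ $\{0,1\}$ blocks and concatenate over $i$. The paper's specific gadget is $\phi_1(a)|_i = (a_i,\,1-a_i,\,1-a_i)$, $\phi_2(b)|_i = (1-b_i,\,b_i,\,1-b_i)$, whereas yours pads a $2$-dimensional gadget with a zero coordinate --- a cosmetic difference (and in fact, dropping the pad would give maps into $\{0,1\}^{2d}$, slightly tighter than the stated $3d$).
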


\begin{proof}
Each dimension expands into three dimensions as follows:
$$a_i \mapsto (\phi_1(a)_{3i}, \phi_1(a)_{3i+1}, \phi_1(a)_{3i+2}) = (a_i, 1 - a_i, 1 - a_i)$$
$$b_i \mapsto (\phi_2(b)_{3i}, \phi_2(b)_{3i+1}, \phi_2(b)_{3i+2}) = (1 - b_i,b_i, 1 - b_i).$$
Then for each $i$, $$\sum_{j=3i}^{3i+2} \phi_1(a)_j \phi_2(b)_j = a_i(1-b_i) + (1-a_i)b_i + (1-a_i)(1-b_i) = 1 - a_ib_i.$$ Summing over $i=1,\dots,d$ we get $\phi_1(a) \cdot \phi_2(b) = d - (a \cdot b)$ as desired.
\end{proof}

\begin{lemma}\label{lemma:mom}
Suppose that $(1+\epsilon)$-approximate asymmetric EMD in $D$ dimensions can be solved with an additional additive factor of $n\epsilon \sqrt{D}$ in $T(n,D)$ time. Then the Maximum Orthogonal Matching problem in $d$ dimensions can be solved up to an additive factor of $O(n \epsilon d)$ in $T(2n, 12d+1)$ time.
\end{lemma}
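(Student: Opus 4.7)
The plan is to reduce MOM in $d$ dimensions to asymmetric EMD matching in $O(d)$ dimensions by designing an embedding in which orthogonal pairs are strictly the closest, then padding the larger set with auxiliary ``filler'' vectors equidistant from every embedded $A$-vector. Using Lemma~\ref{lemma:negateproduct} together with the paper's concatenation shorthand, I would take
\[
a'' = \phi_1(a) \, (1^{|a|}0^{d-|a|}) \, 0^d \, 0^2, \quad b'' = \phi_2(b) \, 0^d \, (1^{|b|}0^{d-|b|}) \, 0^2, \quad z = 0^{5d} \, 1^2,
\]
all in $\{0,1\}^{5d+2}$. The two disjoint weight-balancing pads force $\norm{a''}_2^2 = \norm{b''}_2^2 = 2d$ independently of $|a|, |b|$, while $a'' \cdot b'' = \phi_1(a) \cdot \phi_2(b) = d - a \cdot b$, so $\norm{a''-b''}_2^2 = 2d + 2(a \cdot b)$ and $\norm{z - a''}_2^2 = 2d + 2$. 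Hence orthogonal pairs sit at distance $d_0 := \sqrt{2d}$, non-orthogonal pairs at distance at least $d_1 := \sqrt{2d+2}$, and every auxiliary pair at distance exactly $d_1$. Finally, let $B''$ be the embedded $B$ together with $n$ copies of $z$, so $|A''| \leq n$ and $|B''| = 2n$.

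Given any injection $A'' \to B''$ with $k$ orthogonal pairs, its cost is at least $|A| d_1 - k(d_1 - d_0)$, and this lower bound is achieved at $k = k_{\max}$ by pairing a maximum orthogonal matching with auxiliary matches for the remaining $a$'s, so $OPT = k_{\max} d_0 + (|A| - k_{\max}) d_1$. If the asymmetric EMD algorithm returns an injection $\pi$ with $k'$ orthogonal pairs and cost $C \leq (1 + \epsilon) OPT + 2n \epsilon \sqrt{D}$, combining with $C \geq |A| d_1 - k'(d_1 - d_0)$ yields
\[
(k_{\max} - k')(d_1 - d_0) \leq \epsilon \cdot OPT + 2n\epsilon\sqrt{D} = O(n\epsilon\sqrt{d}).
\]
Since $d_1 - d_0 = 2/(\sqrt{2d+2} + \sqrt{2d}) = \Theta(1/\sqrt{d})$, this gives the desired $k_{\max} - k' = O(n \epsilon d)$. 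The MOM injection is then constructed by taking each $A \to B$ pair from $\pi$ verbatim and assigning each $a$ matched to an auxiliary vector to an arbitrary unused element of $B$; these filler matches cannot reduce the orthogonal-pair count below $k'$.

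The main obstacles are the two simultaneous requirements that the Euclidean distance depend \emph{only} on $a \cdot b$, and that a \emph{single} $\{0,1\}$-valued auxiliary vector be exactly equidistant from every $a''$. The raw map of Lemma~\ref{lemma:negateproduct} fails the first because $\phi_1(a)$ has Hamming weight $2d - |a|$; the two disjoint weight-balancing pads correct this at the cost of roughly doubling the dimension, while preserving the key dot product $\phi_1(a) \cdot \phi_2(b)$. The second constraint is then handled by placing $z$ on two reserved coordinates that are zero in every $a''$, so that $z$ contributes exactly $2$ to the squared distance uniformly in $a$. A conceptual point is that the gap $d_1 - d_0 = \Theta(1/\sqrt{d})$ is precisely what drives the $O(n \epsilon d)$ error: replicating the embedding to amplify the gap also inflates $OPT$ by the same factor, so the quotient governing the error is resistant to obvious improvements.
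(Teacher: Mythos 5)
Your proposal is correct and follows essentially the same route as the paper: embed $A,B$ through $\phi_1,\phi_2$ from Lemma~\ref{lemma:negateproduct}, pad on disjoint coordinate blocks so that all embedded vectors have equal squared norm, adjoin copies of a single auxiliary vector to the larger side, and extract the MOM error bound from the gap $d_1-d_0=\Theta(1/\sqrt{d})$. Your construction is a bit tighter ($5d+2$ dimensions versus the paper's $12d+1$, which can of course be padded with zeros to match the stated $T(2n,12d+1)$), and your choice of auxiliary vector $z$ lands exactly at the threshold distance $\sqrt{2d+2}$, which lets you skip the paper's explicit ``reassign to $v$'' cleanup step in favor of the direct cost lower bound $C \geq |A|d_1 - k'(d_1-d_0)$; both versions yield the same $O(n\epsilon d)$ conclusion.
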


\begin{proof}
Let $A, B \subseteq \{0,1\}^{d}$ with $|A| \leq |B| = n$. Define $A', B' \subseteq \{0,1\}^{3d}$ by $A' = \phi_1(A)$ and $B' = \phi_2(B)$, where $\phi_1, \phi_2$ are as defined in Lemma~\ref{lemma:negateproduct}.

Let $d' = 3d$ for convenience. Now we construct sets $A'', B'' \subseteq \{0,1\}^{4d'+1}$ as follows, starting from sets $A'$ and $B'$. We add $2d'$ dimensions to ensure that $\norm{a''}_2^2 = \norm{b''}_2^2 = d'$ for every $a'' \in A''$ and $b'' \in B''$ without changing the inner products. Add another $d'+1$ dimensions, extending each $a'' \in A''$ so that $a''_{3d'+1} = 1$ and $a''_i = 0$ otherwise; and extend each $b'' \in B''$ so that $b''_{3d'+2} = 1$ and $b''_i = 0$ otherwise. Finally augment $B''$ with $|A|$ copies of the vector $v \in \{0,1\}^{4d'+1}$ with $3d'$ zeros followed by $d'+1$ ones.

Notice that for every $a \in A$ and $b \in B$ corresponding to some $a'' \in A''$ and $b'' \in B''$, $$\norm{a'' - b''}_2^2 = \norm{a''}_2^2 + \norm{b''}_2^2 - 2 a'' \cdot b'' = 2(d'+1) - 2 a'' \cdot b'' = 2a \cdot b + 4d + 2,$$ and $$\norm{a'' - v}_2^2 = 2(d'+1) - 2 a'' \cdot v = 4d + 4.$$

Now we run the approximate asymmetric EMD matching algorithm on $A''$ and $B''$, yielding an injection $\pi: A'' \to B''$ such that $$\sum_{a'' \in A''} \norm{a'' - \pi(a'')}_2 \leq |B''|\epsilon \sqrt{4d'+1} + (1+\epsilon)EMD(A'', B'').$$ For each $a'' \in A''$, if $\norm{a'' - \pi(a'')}_2^2 > 4d+4$, then we can set $\pi(a'') = v$, preserving injectivity and decreasing the cost of the matching. Therefore every edge has cost either $\sqrt{4d+2}$ or $\sqrt{4d+4}$. In particular, if there are $m$ orthogonal pairs in the matching, the total cost is $$\sum_{a'' \in A''} \norm{a'' - \pi(a'')}_2 = m\sqrt{4d+2} + (|A|-m)\sqrt{4d+4}.$$ By the same argument as above, the minimum cost matching is obtained by maximizing the number of orthogonal pairs. If the maximum possible number of orthogonal pairs in a matching is $m_\text{OPT}$, then $$EMD(A'', B'') = m_\text{OPT}\sqrt{4d+2} + (|A|-m_\text{OPT})\sqrt{4d+4}.$$

Substituting these expressions into the approximation guarantee and solving, we get that $m \geq m_\text{OPT} - O(\epsilon n d)$ as desired.
\end{proof}

In the above lemma we assumed that we are given an algorithm for asymmetric EMD matching which has both a multiplicative error of $1+\epsilon$ and an additive error of $n\epsilon\sqrt{d}$, since this is the error introduced by the reduction to (symmetric) EMD. However, we are also interested in the hardness of $(1+\epsilon)$-approximate asymmetric EMD matching in its own right. Removing the additive error from the hypothesized algorithm in Lemma~\ref{lemma:mom} directly translates to an improved Maximum Orthogonal Matching algorithm, with an additive error of $O(\epsilon |A| d)$ instead of $O(\epsilon nd)$, where $n = |A| + |B|$:

\begin{lemma}\label{lemma:emdmom}
Suppose that there is an algorithm which solves $(1+\epsilon)$-approximate asymmetric EMD matching in $T(|A|+|B|,d)$ time, where the input is $A, B \subseteq \{0,1\}^d$. Then the Maximum Orthogonal Matching problem can be solved up to an additive error of $O(\epsilon |A| d)$ in $T(2n, 12d+1)$ time.
\end{lemma}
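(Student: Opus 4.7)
The plan is to repeat the construction from Lemma~\ref{lemma:mom} verbatim, but feed the sets $A'', B'' \subseteq \{0,1\}^{12d+1}$ directly into the hypothesized asymmetric EMD matching algorithm rather than first symmetrizing via Lemma~\ref{lemma:symmetrize}. Concretely, I would apply $\phi_1, \phi_2$ from Lemma~\ref{lemma:negateproduct} to get $A', B' \subseteq \{0,1\}^{3d}$ with $\phi_1(a)\cdot\phi_2(b)=d-a\cdot b$, pad with $2d'$ balancing dimensions so every vector has squared norm $d'=3d$, append a pair of indicator coordinates (one on the $A$ side, one on the $B$ side) so that $\|a''\|_2^2=\|b''\|_2^2=d'+1$ without affecting inner products, and finally add $|A|$ copies of the anchor vector $v$ (zeros on the first $3d'$ coordinates, ones on the last $d'+1$) to $B''$. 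The same calculation as in Lemma~\ref{lemma:mom} then gives $\|a''-b''\|_2^2 = 2(a\cdot b)+4d+2$ for original pairs and $\|a''-v\|_2^2 = 4d+4$.

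Since the hypothesized algorithm has \emph{only} multiplicative error, running it on $A'', B''$ yields an injection $\pi:A''\to B''$ with
\[
\sum_{a''\in A''}\|a''-\pi(a'')\|_2 \;\leq\; (1+\epsilon)\,\mathrm{EMD}(A'', B''),
\]
with no $|B''|\epsilon\sqrt{D}$ term. As in Lemma~\ref{lemma:mom}, any edge whose squared length exceeds $4d+4$ can be rerouted to an unused copy of $v$ without increasing the cost, so every edge in $\pi$ has length exactly $\sqrt{4d+2}$ (an orthogonal pair) or $\sqrt{4d+4}$. Writing $m$ for the number of orthogonal pairs in $\pi$ and $m_{\mathrm{OPT}}$ for the maximum, we obtain
\[
m\sqrt{4d+2}+(|A|-m)\sqrt{4d+4} \;\leq\; (1+\epsilon)\bigl(m_{\mathrm{OPT}}\sqrt{4d+2}+(|A|-m_{\mathrm{OPT}})\sqrt{4d+4}\bigr).
\]

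Rearranging gives $(m_{\mathrm{OPT}}-m)(\sqrt{4d+4}-\sqrt{4d+2}) \leq \epsilon\cdot|A|\sqrt{4d+4}$. Using $\sqrt{4d+4}-\sqrt{4d+2} = 2/(\sqrt{4d+4}+\sqrt{4d+2}) = \Theta(1/\sqrt{d})$, this yields $m_{\mathrm{OPT}}-m = O(\epsilon |A| d)$, which is exactly the claimed bound. The running time follows from a single invocation of the oracle on $|A''|+|B''|\leq 2n$ vectors in $12d+1$ dimensions, plus linear-time preprocessing. There is no real obstacle here: the only reason Lemma~\ref{lemma:mom} suffered the looser $O(\epsilon n d)$ factor was the detour through symmetric EMD, which forced the matching size up to $|B''|$ and imported the extra additive $n\epsilon\sqrt{D}$ term; once that detour is removed, both of those losses disappear, replacing $n$ by $|A|$.
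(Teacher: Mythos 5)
Your proposal is correct and follows exactly the route the paper intends: Lemma~\ref{lemma:emdmom} is stated without a separate proof, with the surrounding text asserting only that removing the additive term from the hypothesis of Lemma~\ref{lemma:mom} yields the improved $O(\epsilon|A|d)$ bound, and you have supplied precisely the calculation being alluded to. The algebra is right---rearranging the multiplicative guarantee to $(m_{\mathrm{OPT}}-m)(\sqrt{4d+4}-\sqrt{4d+2}) \leq \epsilon|A|\sqrt{4d+4}$ and using $\sqrt{4d+4}-\sqrt{4d+2}=\Theta(1/\sqrt{d})$---so the claimed $m_{\mathrm{OPT}}-m = O(\epsilon|A|d)$ follows as stated.
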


Now we could reduce OV to approximate Maximum Orthogonal Matching. The proof of the following theorem is given in Appendix~\ref{section:approxov} for completeness.

\begin{theorem}\label{thm:approxtoov}
Let $d = \omega(\log n)$. Under the Orthogonal Vectors Conjecture, for any $\epsilon > 0$ and $\delta \in (0,1)$, $(1+1/n^\delta)$-approximate EMD matching in $\{0,1\}^d$ cannot be solved in $O(n^{2\delta - \epsilon})$ time.
\end{theorem}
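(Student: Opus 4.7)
The plan is to complete the reduction chain OV $\to$ approximate MOM $\to$ approximate asymmetric EMD matching $\to$ approximate EMD matching; the last two links are already supplied by Lemmas~\ref{lemma:symmetrize} and~\ref{lemma:mom}, so the only new step is reducing OV to approximate MOM. Chaining those two lemmas, a hypothetical $(1+\epsilon)$-approximate EMD matching algorithm in dimension $D$ running in time $T(n,D)$ yields an approximate MOM algorithm in dimension $d$ with additive error $O(n\epsilon d)$ and running time $T(O(n), O(d))$.

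For the OV-to-MOM step I would use amplification: given an OV instance $(A,B)$ with $|A|=|B|=N$ in dimension $d_{OV}$, form $(A^*, B^*)$ by taking $t$ duplicates of each vector, so $|A^*|=|B^*|=tN$ in the same dimension $d_{OV}$. The optimal MOM value on $(A^*,B^*)$ is $0$ when $(A,B)$ has no orthogonal pair and at least $t$ otherwise, since any witness pair $(a,b)$ gives $t$ disjoint orthogonal edges among its copies. Thus deciding OV reduces to approximating MOM within additive error strictly less than $t$.

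Next I would calibrate parameters. Running the hypothesized EMD algorithm with approximation $\epsilon_{\text{EMD}} = 1/(2tN)^\delta$ and propagating errors through Lemmas~\ref{lemma:symmetrize} and~\ref{lemma:mom} yields MOM additive error $O((tN)^{1-\delta} d_{OV})$. The choice $t = \Theta\bigl(N^{(1-\delta)/\delta}\, d_{OV}^{1/\delta}\bigr)$ forces this error below $t$, and the total running time is $O((2tN)^{2\delta - \epsilon}) = O\bigl(N^{2-\epsilon/\delta}\, d_{OV}^{2-\epsilon/\delta}\bigr)$. Taking $d_{OV} = \omega(\log N)$ growing slowly enough (this is compatible with the EMD dimension $d(n) = \omega(\log n)$: since $n = \poly(N)$ we have $d(n) = \omega(\log N)$, and the constructed EMD instance can be padded with zero coordinates to match the hypothesized algorithm's dimension), this is truly subquadratic in $N$ and contradicts the Orthogonal Vectors Conjecture.

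The main obstacle is bookkeeping: verifying that a single choice of $t$ simultaneously keeps the MOM additive error below $t$, keeps the blown-up instance size only a polynomial factor larger than $N$, and keeps $d_{OV}$ both $\omega(\log N)$ (so OVC is applicable) and at most $O(d(n))$ (so the hypothesized algorithm can be invoked after padding). The amplification parameter $t$ is chosen precisely to balance these constraints; once it is identified, the remaining arithmetic is routine propagation of the multiplicative and additive errors through the two lemmas.
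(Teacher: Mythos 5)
Your proof is correct and takes essentially the same route as the paper: the paper's Appendix~\ref{section:approxov} also reduces OV to approximate Maximum Orthogonal Matching by duplicating every input vector $t$ times (so that a single orthogonal pair yields an orthogonal matching of size $t$ exceeding the additive error budget), and then chains through Lemmas~\ref{lemma:symmetrize} and~\ref{lemma:mom} exactly as you do. The only cosmetic difference is that the paper isolates the amplification step as a standalone proposition with a fixed duplication factor $2n^{\delta/(1-\delta)}$, whereas you carry the parameter $t$ through and solve for it at the end; the arithmetic is equivalent.
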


However, Theorem~\ref{thm:approxtoov} does not prove quadratic hardness for any approximation factor larger than $(1 + 1/n)$, and in fact breaks down completely for $(1+1/\sqrt{n})$-approximate EMD matching.

Instead, we reduce Hitting Set to approximate Maximum Orthogonal Matching, through approximate Find-OV. These two problems are structurally similar; the technical difficulty is that Find-OV may require finding many orthogonal pairs even when the largest orthogonal matching may be small, in which case applying the Maximum Orthogonal Matching algorithm would result in little progress. We resolve this with the following insight: if many vectors in set $A$ are orthogonal to at least one vector in set $B$ but there is not a large orthogonal matching, then some vector in set $B$ is orthogonal to \emph{many} vectors in $A$. But these vectors can be found efficiently by sampling.

In the proof of the following theorem we formalize the above idea.

\begin{theorem}\label{thm:findov}
Let $d = d(n)$ be a dimension. Suppose that the Maximum Orthogonal Matching problem can be solved up to an additive error of $E(|A|,|B|)$ in $O(n^{2-\epsilon}\poly(d))$ time, where the input is $A, B \subseteq \{0,1\}^d$. Then for any (sufficiently small) $\alpha > 0$ there is some $\gamma > 0$ such that Find-OV can be solved with high probability up to an additive error of $E(|A|, 2|B|^{1+\alpha})$ in $O(n^{2 - \gamma}\poly(d))$ time.
\end{theorem}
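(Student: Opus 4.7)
The plan is to reduce Find-OV$(A,B)$ to a single call to the MOM oracle on $(A, B \times [k])$ with $k = 2|B|^\alpha$, supplemented by a random-sampling step that detects vectors $b \in B$ of unusually large degree. Setting $B^* = B \times [k]$ keeps $|B^*| = 2|B|^{1+\alpha}$, within the theorem's budget; the MOM output then incurs additive error at most $E' := E(|A|, 2|B|^{1+\alpha})$ against the maximum matching on $(A, B^*)$.

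The algorithm has three main parts. First, run MOM on $(A, B^*)$ to obtain a matching $M$; let $M_A$ denote the matched vectors in $A$ and $M_B$ the set of $b \in B$ whose full $k$ copies appear in $M$, so $|M_B| \leq |M|/k \leq |A|/k$, and compute $N_A(M_B) := \{a \in A : a \cdot b = 0 \text{ for some } b \in M_B\}$ directly in $O(|A|^2 d/k) = O(n^{2-\alpha} d)$ time. Second, sample $A_0 \subseteq A$ uniformly of size $s = \tilde\Theta(|A|/k)$ and compute $X_b := |A_0 \cap N(b)|$ for every $b \in B$ in $\tilde O(|B| \cdot s \cdot d) = \tilde O(n^{2-\alpha} d)$ time; by Chernoff, a suitable $\Theta(\log n)$ threshold yields a ``popular'' set $P$ that (w.h.p.) contains every $b$ with $\deg(b) \geq k/2$ and excludes every $b$ with $\deg(b) < k/4$. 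Third, compute $N_A(b)$ for every $b \in P$ and output $S' := M_A \cup N_A(M_B) \cup \bigcup_{b \in P} N_A(b)$.

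For correctness, $S' \subseteq S$ is immediate because every vector added to $S'$ has an explicit orthogonal partner in $B$. For the other direction, take any $T \subseteq S \setminus S'$: each $a \in T$ is unmatched in $M$ and has no orthogonal neighbor in $M_B \cup P$, so every $b \in N(T)$ satisfies $b \notin M_B$ (giving residual capacity $c_b = k - \mathrm{used}_M(b) \geq 1$) and $\deg(b) < k/2$ w.h.p. (so in fact $c_b > k/2$). A sum-of-degrees argument then gives $|N(T)| > 2|T|/k$, hence $\sum_{b \in N(T)} c_b > (k/2)(2|T|/k) = |T|$, and Hall's theorem furnishes a $T$-saturating matching in the residual bipartite graph. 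Combining this matching with $M$ produces a matching in $(A, B^*)$ of size $|M| + |T| \leq m_{\text{OPT}}$, whence $|T| \leq m_{\text{OPT}} - |M| \leq E'$, and so $|S \setminus S'| \leq E'$.

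The main obstacle is the runtime of the third step: although the analysis only requires $P$ to contain the vectors of degree $\geq k/2$, in the worst case $|P| = \Theta(|A||B|/k) = \Theta(n^{2-\alpha})$, which makes naive computation of $\bigcup_{b \in P} N_A(b)$ cost $\tilde O(n^{3-\alpha} d)$---superquadratic. I would attempt to resolve this by iterating the sampling-plus-MOM procedure: process only a handful of $b$'s with the largest $X_b$ values per iteration, re-run MOM on the shrinking residual $(A \setminus S', B)$, and exploit the fact that approximate MOM can fail to saturate at most $\lceil E'/k \rceil$ truly heavy vectors from $M_B$. With care, $\polylog(n)$ iterations should suffice to drive $|S \setminus S'| \leq E'$, keeping the overall runtime $\tilde O(n^{2-\gamma} \poly(d))$ for some $\gamma = \gamma(\alpha) > 0$.
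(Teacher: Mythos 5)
Your correctness argument is sound and parallels the paper's: in both, one runs MOM on $A$ against $\Theta(n^\alpha)$ copies of $B$ and uses a capacitated Hall-type condition to argue that the positive-degree vectors of $A$ that are still missing number at most the MOM additive error. What differs is the order of operations: you call MOM first and then try to identify the problematic $b$'s (the saturated set $M_B$ and the heavy set $P$), whereas the paper performs all degree-based cleanup \emph{before} the single MOM call. That reordering is not cosmetic, and it is where the gap lies.

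The gap is exactly the one you flagged: the size of $P$. A small correction to your bound — $P \subseteq B$, so $|P| \leq |B| = n$, not $\Theta(n^{2-\alpha})$ — but the runtime problem persists, since scanning all of $A$ against up to $n$ heavy $b$'s is $\Theta(n^2 d)$, still quadratic. Your proposed fix of iterating MOM while peeling off a handful of heavy $b$'s per round is not worked out and does not obviously converge: $|P|$ is controlled by $\sum_{a} \deg(a)$, and nothing MOM does reduces that sum. The missing idea, and the way the paper closes the gap, is to apply a degree cap on the \emph{$A$-side first}: for each $a \in A$, sample $\tilde{O}(n^{1-\alpha/4})$ vectors of $B$, and if the estimate shows $\deg(a) \gtrsim n^{\alpha/2}$ then mark $a$ as found (any such $a$ is certifiably in $S$) and delete it. This costs $\tilde{O}(n^{2-\alpha/4}d)$, and afterwards $\sum_b \deg(b) = \sum_a \deg(a) \lesssim n^{1+\alpha/2}$, so the number of $b$'s with $\deg(b) \gtrsim n^{\alpha}$ is only $O(n^{1-\alpha/2})$; scanning those against $A$ is $\tilde{O}(n^{2-\alpha/2}d)$ and no iteration is needed. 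With both degree caps in place, every remaining degree is $\lesssim n^{\alpha}$, the Hall condition holds outright, and a single MOM call on $2n^{\alpha}$ copies of the pruned $B$ finishes the job.
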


\begin{proof}
Let $A, B \subseteq \{0,1\}^d$ with $|A| = |B| = n$. Let $\alpha > 0$ be a constant we choose later. We may safely assume that $\alpha < 1$. Let the degree of a vector $a \in A$, denoted $d(a)$, be the number of $b \in B$ which are orthogonal to $a$. The algorithm for Find-OV consists of three steps:
\begin{enumerate}
\item For every $a \in A$, sample $n^{1 - \alpha/4}$ vectors from $B$ to get an estimate $\hat{d}(a)$ of $d(a)$. Mark and remove the vectors for which $\hat{d}(a) \geq n^{\alpha/2}$.

\item Next, for every $b \in B$, sample $n^{1 - \alpha/2}$ vectors from $A$ to get an estimate $\hat{d}(b)$ of $d(b)$. Let $B_\text{large} \subseteq B$ be the set of vectors for which $\hat{d}(b) \geq n^{\alpha}$. For each $b \in B_\text{large}$, iterate over $A$ and mark and remove each $a \in A$ for which $a \cdot b = 0$. Now remove $B_\text{large}$ from $B$.

\item Run the Maximum Orthogonal Matching algorithm on the remaining set $A$, and the multiset consisting of $2n^\alpha$ copies of each remaining $b \in B$. This produces a set of pairs $(a_i, b_i)$ where $a_i \cdot b_i = 0$. Output the union of $\{a_i\}_i$ and the set of all vectors marked and removed from $A$ in the previous steps.
\end{enumerate}

In the first step, a Chernoff bound shows that with high probability, every vector for which $d(a) \geq 2n^{\alpha/2}$ is marked and removed. Now summing over the remaining vectors, $$\sum_{a \in A} d(a) = \sum_{b \in B} d(b) \leq 2n^{1 + \alpha/2}.$$

In the second step, with high probability $B_\text{large}$ contains no $b \in B$ for which $d(b) \leq \frac{1}{2} n^\alpha$, by a Chernoff bound on each such $b \in B$. Therefore $|B_\text{large}| \leq 4n^{1 - \alpha/2}$. Furthermore, with high probability $B_\text{large}$ contains every $b \in B$ for which $d(b) \geq 2 n^\alpha$.

So after the first two steps, every remaining vector $b \in B$ has degree at most $2n^\alpha$. Suppose there are $t$ vectors $a \in A$ with positive degree, and $t'$ of these are found in the first two steps. Then by the degree bound, the remaining $t-t'$ vectors inject into $2n^\alpha$ copies of $B$. Therefore there is an orthogonal matching of size at least $t - t'$. By the approximation guarantee of the Maximum Orthogonal Matching algorithm, we find an orthogonal matching of size at least $t - t' - 2n^{(1+\alpha)\delta}$ in step 3. Overall, we find at least $t - 2n^{(1+\alpha)\delta}$ vectors with positive degree, which gives the desired approximation guarantee.

The time complexity is $O((n^{2 - \alpha/4} + n^{(2-\epsilon)(1+\alpha)})\poly(d))$. This is subquadratic in $n$ for sufficiently small $\alpha$.
\end{proof}

As the final step of the reduction, we show that approximate Find-OV can solve Hitting Set. Note that exact Find-OV obviously solves Hitting Set. It's also clear that Find-OV with an additive error of $n^{1-\epsilon}$ solves Hitting Set: simply run Find-OV, and then exhaustively check the remaining unpaired vectors of $A$---unless there are more than $n^{1-\epsilon}$ unpaired vectors, in which case there must be a hitting vector.

To reduce Hitting Set to Find-OV with additive error of $\Theta(n)$, the essential idea is simply to repeatedly run Find-OV on the remaining unpaired vectors. If the Find-OV algorithm has an additive error of $n/2$, then given an input $A, B$ with no hitting vector, the algorithm will find orthogonal pairs for at least $n/2$ vectors of $A$. Naively, we'd like to recurse on the remaining half of $A$. Unfortunately, the set $B$ cannot similarly be halved, so the error bound in the next step would not be halved. Thus, the algorithm might make no further progress.

The workaround is to duplicate every unpaired vector of $A$ before recursing. If $n/2$ orthogonal pairs are found but every vector of $A$ has been duplicated once, then matches are found for at least $n/4$ distinct vectors. This suffices to terminate the recursion in $O(\log n)$ steps.

\begin{theorem}\label{thm:hs}
Suppose that Find-OV in $d$ dimensions can be solved up to additive error of $n/2$ in $T(n,d)$ time. Then Hitting Set in $d$ dimensions can be solved in $O((T(n,d) + nd) \log n)$ time.
\end{theorem}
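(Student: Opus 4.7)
The plan is to iteratively narrow down a candidate set $U_k \subseteq A$ of possible hitting vectors, initialized to $U_0 := A$, by calling the given Find-OV oracle in each round to bulk-remove certified non-hitting vectors. After $O(\log n)$ rounds either $U_k$ becomes empty (certifying no hitting vector) or its hitting vectors dominate a constant fraction of it, so a small amount of random sampling finishes the job.

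At round $k$ I turn $U_k$ into a full-size Find-OV instance $(A^{(k)}, B^{(k)})$ as follows. Setting $m_k := n/|U_k|$ (assuming for simplicity $|U_k|$ divides $n$; rounding is handled by small dummy padding), I insert $m_k$ distinguishable copies of each $u \in U_k$ into $A^{(k)}$, differentiated by $\lceil \log m_k \rceil = O(\log n)$ appended ``tag'' coordinates. Every $b \in B$ is extended with zeros on the tag coordinates, so the inner product between any copy of $u$ and any $b$ equals the original $u \cdot b$. If padding is needed, I add dummy vectors that use a dedicated coordinate set to $1$ in dummies and in every $b \in B^{(k)}$, ensuring dummies cannot be orthogonal to any $b$. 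I run Find-OV on $(A^{(k)}, B^{(k)})$, strip the tag bits from the returned matched set to get $M_k \subseteq U_k$, and set $U_{k+1} := U_k \setminus M_k$.

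The key counting step uses that a hitting vector $u \in A$ stays in $U_k$ forever and contributes $m_k$ inherently unmatchable copies to $A^{(k)}$. Letting $h$ denote the (invariant) number of hitting vectors in $A$, we get $|S| = m_k(|U_k| - h)$, so Find-OV's additive error $n/2$ leaves at least $m_k(|U_k| - h) - n/2$ matched copies, corresponding to at least $|U_k|/2 - h$ distinct elements of $U_k$. This gives the recurrence $|U_{k+1}| \leq |U_k|/2 + h$, hence $|U_k| \leq n/2^k + 2h$. After $\log n$ rounds, $|U_{\log n}| \leq 2h + 1$. If $h = 0$ the set is empty and I return ``no''; if $h \geq 1$ then $h$ of the at most $2h+1$ remaining candidates are hitting vectors, so I sample $O(\log n)$ uniformly random elements from $U_{\log n}$ and verify each by computing all inner products with $B$ in $O(nd)$ time, which with high probability recovers one.

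The main obstacle is the bookkeeping around dummy padding and the rounding of $m_k$ when $|U_k|$ does not evenly divide $n$: I must check that these losses consume only a bounded slice of the $n/2$ error budget so that the $|U_k|/2 - h$ signal survives intact in every round. A secondary issue is the $O(\log n)$ additive dimension blow-up from tag coordinates, which is absorbed into $T(n,d)$ for the regime $d = \omega(\log n)$ where this theorem is applied downstream. The total cost is $O(\log n)$ iterations, each consisting of one Find-OV call plus $O(nd)$ setup work, followed by $O(\log n)$ sampling verifications of cost $O(nd)$ each, matching the claimed $O((T(n,d) + nd) \log n)$ bound.
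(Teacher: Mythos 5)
Your proposal follows the same core strategy as the paper's proof: iteratively blow up the set of still-unresolved $A$-vectors by duplication, call the Find-OV oracle, remove the vectors it certifies, and recurse until the candidate set collapses after $O(\log n)$ rounds. The counting argument (a hitting vector contributes only unmatchable copies, so the additive error $n/2$ translates to removing roughly half of the non-hitting candidates per round) is the same engine; the paper just parameterizes it differently, using a fixed doubling schedule of $2^{i-1}$ copies in phase $i$ and a per-phase failure check $|R_{i+1}| > n/2^{i}$ in place of your explicit recurrence $|U_{k+1}| \le |U_k|/2 + h$.

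Two small points where your version is slightly less clean than necessary. First, the tag coordinates are not needed: the paper treats the Find-OV input as a multiset directly (it writes $2^{i-1}R_i$ and deduplicates the output), which avoids the $+O(\log n)$ dimension overhead and gives you literally $T(n,d)$ rather than $T(n, d+O(\log n))$; your caveat about absorbing the blowup is correct downstream but unnecessary. Second, the final random-sampling step can be eliminated entirely: after $\lceil\log n\rceil + 1$ rounds your recurrence gives $|U| \le 1/2 + 2h$, and since hitting vectors are never removed you also have $|U| \ge h$. Hence $U$ is nonempty if and only if $h \ge 1$, so you can simply report ``hitting vector exists'' iff $U$ is nonempty, making the reduction deterministic as in the paper. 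Finally, note that when rounding $m_k$ you should use $\lceil n/|U_k|\rceil$ rather than $\lfloor n/|U_k|\rfloor$: with the floor, the term $n/(2m_k)$ can exceed $|U_k|/2$ and the recurrence degrades, whereas with the ceiling the instance size is at most $2n$, the error budget is at most $n$, and $n/(2m_k) \le |U_k|/2$ holds exactly, so the $|U_k|/2 - h$ signal survives as you hoped.
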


\begin{proof}
Let $A, B \subseteq \{0,1\}^d$ with $|A| = |B| = n$. Our hitting set algorithm consists of $t = \lceil \log n \rceil + 1$ phases. Initialize $R_1 = A$.

In phase $i \geq 1$, run Find-OV on $(2^{i-1} R_i, B)$, where $2^i R_i$ is the multiset with $2^i$ copies of each vector in $R_i$. Let $P \subseteq A$ be the output multiset and let $P'$ be the corresponding set (removing duplicates). Set $R_{i+1} = R_i \setminus P'$. If $|R_{i+1}| > n/2^i$, report failure (i.e. there is a hitting vector). Otherwise, proceed to the next phase. If phase $t$ is complete, report success (i.e. no hitting vector).

Suppose that the algorithm reports success. Then after phase $t$, we have $R_{t+1} \leq n/2^t < 1$. Then for every $a \in A$ there was some phase $i$ in which $a$ was removed from $R_i$, and therefore was orthogonal to some $b \in B$. So there is no hitting vector.

Suppose that the algorithm reports failure in phase $i$. Then $|R_i| \leq n/2^{i-1}$ and $|R_{i+1}| > n/2^i$, so $|P'| < n/2^i$. Therefore $|P| \leq 2^{i-1}|P'| < n/2$. By the Find-OV approximation guarantee, not every element of $R_i$ is orthogonal to an element of $B$. So there is a hitting vector.

The time complexity is dominated by $O(\log n)$ applications of Find-OV on inputs of size $O(n)$, along with $O(nd)$ extra processing in each phase. Thus, the time complexity is $O((T(n,d) + nd)\log n)$.
\end{proof}

The next theorem shows that hardness for approximate EMD matching (conditioned on the Hitting Set Conjecture) follows from chaining together the above reductions.

\begin{theorem}
If there are any $\epsilon, \delta > 0$ such that $(1+1/n^\delta)$-approximate EMD matching can be solved in $O(n^{2-\epsilon})$ time for some dimension $d = \omega(\log n)$, then the Hitting Set Conjecture is false.
\end{theorem}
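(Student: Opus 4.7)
The plan is to chain the four reductions built in this section: Lemma~\ref{lemma:symmetrize}, Lemma~\ref{lemma:mom}, Theorem~\ref{thm:findov}, and Theorem~\ref{thm:hs}. Padding input vectors with zeros preserves Euclidean distances, so a truly subquadratic algorithm for EMD matching in any dimension $d(n) = \omega(\log n)$ also yields one in every smaller dimension $D(n) \le d(n)$. Thus, as a preliminary step, I would fix some $D(n) = \omega(\log n)$ with $D(n) = O(\log^2 n)$; this is always attainable given any $d(n) = \omega(\log n)$ in the hypothesis (either $d(n) \ge \log^2 n$, in which case take $D(n) = \log^2 n$, or $d(n)$ itself is already in the desired range).

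Writing $\epsilon_0 = 1/n^\delta$, feeding the hypothesized $(1+\epsilon_0)$-approximate EMD matching algorithm into Lemma~\ref{lemma:symmetrize} produces a $(1+\epsilon_0)$-approximate asymmetric EMD matching algorithm in dimension $D(n)/2$ with the additional additive error $n\epsilon_0\sqrt{D(n)/2}$ demanded by Lemma~\ref{lemma:mom}. Applying the latter then yields a Maximum Orthogonal Matching algorithm in dimension $\Theta(D(n))$ with additive error $O(n_{\text{mom}} \cdot \epsilon_0 \cdot D(n)) = O(n_{\text{mom}}^{1-\delta} D(n))$, where $n_{\text{mom}}$ denotes the larger-side size of the MOM input; all three steps remain truly subquadratic.

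Next, for a constant $\alpha > 0$ to be chosen, Theorem~\ref{thm:findov} invokes MOM on a second-side multiset of size $2n^{1+\alpha}$, so the resulting Find-OV algorithm on balanced inputs of size $n$ has additive error $O(n^{(1+\alpha)(1-\delta)} D(n))$. I would pick $\alpha$ small enough that $(1+\alpha)(1-\delta) < 1$ (any $\alpha < \delta/(1-\delta)$ works). Since $D(n) = O(\log^2 n)$, this error is then $o(n)$, and in particular at most $n/2$ for all sufficiently large $n$. Theorem~\ref{thm:hs} now converts the Find-OV algorithm into a truly subquadratic algorithm for Hitting Set in dimension $\Theta(D(n)) = \omega(\log n)$, contradicting the Hitting Set Conjecture.

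The main obstacle is the error bookkeeping: the multiplicative slack $1/n^\delta$ must survive being inflated by the input blow-up of factor $n^{\alpha}$ from Theorem~\ref{thm:findov} and still absorb the dimension factor $D(n)$ introduced by the MOM reduction. Keeping $D(n)$ polylogarithmic via the initial padding observation, and taking $\alpha$ to be a small enough constant depending only on $\delta$, makes the chain close cleanly with asymptotic slack to spare; the rest is routine verification that dimensions stay $\omega(\log n)$ and running times stay $O(n^{2-\gamma})$ at each link.
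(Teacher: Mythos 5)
Your proposal is correct and mirrors the paper's proof essentially step for step: the paper also reduces WLOG to polylogarithmic dimension, chains Lemma~\ref{lemma:symmetrize}, Lemma~\ref{lemma:mom}, Theorem~\ref{thm:findov}, and Theorem~\ref{thm:hs}, and then observes the resulting Find-OV error is $o(n)$. The only cosmetic difference is your choice of any $\alpha < \delta/(1-\delta)$ where the paper fixes $\alpha = \delta$; both satisfy $(1+\alpha)(1-\delta) < 1$, so the arguments coincide.
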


\begin{proof}
Fix $d = \omega(\log n)$, and assume without loss of generality that $d(n)$ is polylogarithmic. Let $\epsilon, \delta > 0$ and suppose that $(1+1/n^\delta)$-approximate EMD matching can be solved in $O(n^{2-\epsilon})$ time. Then $(1+1/n^\delta)$-approximate asymmetric EMD can be solved with an additional additive error of $n^{1-\delta} \sqrt{d}$ with the same time complexity, by Lemma~\ref{lemma:symmetrize}. Hence, the Maximum Orthogonal Matching problem can be solved with an additive error of $n^{1-\delta} d$ in the same time, by Lemma~\ref{lemma:mom}. 

Applying Theorem~\ref{thm:findov} with parameter $\alpha = \delta$, we get a randomized algorithm for Find-OV with an additive error of $O(n^{1-\delta^2}d^{1+\delta})$ and time complexity $O(n^{2-\gamma})$ for some $\gamma > 0$. For sufficiently large $n$, the error is at most $n/2$. Thus, we can apply Theorem~\ref{thm:hs} to get a randomized algorithm for Hitting Set with time complexity $\tilde{O}(n^{2-\gamma})$, which contradicts the Hitting Set Conjecture.
\end{proof}

Furthermore, we obtain stronger hardness of approximation for asymmetric EMD matching:

\begin{theorem}
Let $d = \omega(\log n)$ and $\eta = 1/\omega(\log n)$. If there is a truly subquadratic $(1+\eta)$-approximation algorithm for asymmetric EMD matching in $d$ dimensions, then the Hitting Set Conjecture is false.
\end{theorem}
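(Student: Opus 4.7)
The plan is to follow the same reduction chain as in the proof of Theorem~\ref{theorem:approxEMD}, but to omit the symmetrization step (Lemma~\ref{lemma:symmetrize}) since the hypothesis already gives an asymmetric EMD matching algorithm. Concretely, the chain is $(1+\eta)$-approximate asymmetric EMD matching $\Rightarrow$ approximate MOM (via Lemma~\ref{lemma:emdmom}) $\Rightarrow$ approximate Find-OV (via Theorem~\ref{thm:findov}) $\Rightarrow$ Hitting Set (via Theorem~\ref{thm:hs}).

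First, a $(1+\eta)$-approximate asymmetric EMD matching algorithm in $d$ dimensions can be run on any lower-dimensional input by zero-padding, without changing the matching costs and with only negligible preprocessing overhead. Writing $\eta = 1/h(n)$ with $h(n) = \omega(\log n)$, I would select an integer-valued function $d_0 = d_0(n)$ satisfying simultaneously $d_0 = \omega(\log n)$, $12 d_0 + 1 \leq d$, and $\eta d_0 = o(1)$. Such a choice exists because both $h$ and $d$ grow strictly faster than $\log n$, leaving room between $\log n$ and $\min(h(n), d(n)/12)$; for instance, $d_0(n) = \lceil \sqrt{\log n \cdot \min(h(n), d(n)/12)}\,\rceil$ works for large $n$.

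Apply Lemma~\ref{lemma:emdmom} to the padded asymmetric EMD matching algorithm on $(12 d_0 + 1)$-dimensional inputs: this yields a Maximum Orthogonal Matching algorithm in $d_0$ dimensions with additive error $O(\eta |A| d_0) = o(|A|)$ and truly subquadratic running time. Apply Theorem~\ref{thm:findov} with some sufficiently small constant $\alpha > 0$: this produces a randomized Find-OV algorithm in $d_0$ dimensions with additive error $O(\eta n d_0) = o(n)$ and running time $O(n^{2-\gamma}\poly(d_0))$ for some $\gamma > 0$. For all sufficiently large $n$, the additive error is below $n/2$, so Theorem~\ref{thm:hs} gives a Hitting Set algorithm in $d_0$ dimensions running in $\tilde{O}(n^{2-\gamma})$ time. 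Since $d_0 = \omega(\log n)$, this contradicts the Hitting Set Conjecture.

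The main delicate point is the dimension choice: we need $d_0 = \omega(\log n)$ (so that the Hitting Set Conjecture applies to the derived HS algorithm) while simultaneously requiring $\eta d_0 = o(1)$ (so that the additive error propagated to Find-OV eventually drops below $n/2$). Skipping Lemma~\ref{lemma:symmetrize} is precisely what makes the weaker hypothesis $\eta = 1/\omega(\log n)$ sufficient: had we symmetrized first, we would inherit an additional additive error $n\eta\sqrt{d}$, whose subsequent propagation through Lemma~\ref{lemma:mom} would force $\eta$ to shrink polynomially in $n$ to maintain subquadratic progress, as happens in Theorem~\ref{theorem:approxEMD}.
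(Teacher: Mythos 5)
Your proposal is correct and follows essentially the same reduction chain as the paper's proof: asymmetric EMD matching $\to$ Maximum Orthogonal Matching via Lemma~\ref{lemma:emdmom} $\to$ Find-OV via Theorem~\ref{thm:findov} $\to$ Hitting Set via Theorem~\ref{thm:hs}, with the same balancing act on the working dimension (the paper sets it to $\min(d',\sqrt{(\log n)/\eta})$, you use $\lceil\sqrt{\log n\cdot\min(1/\eta,\,d/12)}\,\rceil$, which is equivalent in effect). Your explicit tracking of the factor-of-$12$ dimension inflation from Lemma~\ref{lemma:emdmom} is slightly more careful than the paper's ``$O(d)$ dimensions'' phrasing, but the argument is the same.
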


\begin{proof}
Fix $d' = \omega(\log n)$ and $\eta = 1/\omega(\log n)$ and $\epsilon > 0$. Suppose that there is an $O(n^{2-\epsilon})$ time algorithm which achieves a $(1+\eta)$ approximation for asymmetric EMD matching in $d'$ dimensions. Set $d = \min(d', \sqrt{(\log n)/\eta})$. Since $\mathbb{R}^d$ embeds isometrically in $\mathbb{R}^{d'}$, the algorithm also achieves a $(1+\eta)$ approximation for asymmetric EMD in $d$ dimensions.

By Lemma~\ref{lemma:emdmom}, the Maximum Orthogonal Matching problem can be solved up to an additive error of $O(\eta|A|d)$ in $O(d)$ dimensions and $O(n^{2-\epsilon})$ time. By Theorem~\ref{thm:findov} there is some $\gamma > 0$ such that Find-OV can be solved up to an additive error of $O(\eta nd)$ in $O(d)$ dimensions and $O(n^{2-\gamma})$ time. By choice of $d$ we have $\eta nd = o(n)$, so for sufficiently large $n$ the algorithm achieves additive error of at most $n/2$. Therefore by Theorem~\ref{thm:hs}, Hitting Set can be solved in $O(d)$ dimensions and $\tilde{O}(n^{2-\epsilon})$ time. Since $d = \omega(\log n)$, this contradicts the Hitting Set Conjecture.
\end{proof}

\paragraph{Acknowledgments.} I want to thank Piotr Indyk and Arturs Backurs for numerous helpful discussions and guidance. I am also grateful to an anonymous reviewer for pointing towards Theorem~\ref{theorem:lowrank} and its proof.

\bibliography{paper}

\appendix

\section{Hardness of Low-Rank Minimum Weighted Assignment}\label{appendix:lowrank}

The methods we used to prove hardness of exact EMD in low dimensions can be adapted to prove hardness of minimum weighted assignment with low-rank weight matrices, under the Orthogonal Vectors Conjecture. In particular, we show in the following theorem that bichromatic closest pair in $d$ dimensions can be reduced to minimum weighted assignment with a rank-$O(d)$ weight matrix. The reduction algorithm uses the same input transformation as Theorem~\ref{theorem:EMDCP}, and then solves minimum weighted assignment on the matrix $M$ with entries $M_{ij} = \norm{A'_i - B'_j}_2^2$, where $A'$ and $B'$ are the transformed input sets. The key is that $M$ has rank $O(d)$, and its minimum weight assignment encodes the squared closest pair distance of the input---just as the EMD of the transformed input in Theorem~\ref{theorem:EMDCP} encoded the closest pair distance of the input.

\begin{theorem}
Fix a dimension $d = d(n) \leq n$, and let $\epsilon > 0$. Suppose that there is an algorithm which solves minimum weighted assignment in $O(n^{2-\epsilon})$ time, if the weight matrix has rank at most $O(d)$. Then bichromatic closest pair in $d$ dimensions can be solved in $O(n^{2-\epsilon})$ time.
\end{theorem}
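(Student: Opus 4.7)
The plan is to reuse the transformation from Theorem~\ref{theorem:EMDCP}: given input sets $A, B \subseteq [1, n^k]^d$, construct the augmented sets $A', B' \subseteq [1, n^{16k}]^{2d+2c+2}$ of size $n$ in exactly the same way. Instead of feeding $(A',B')$ to an EMD oracle, form the $n \times n$ matrix $M$ with entries $M_{ij} = \norm{A'_i - B'_j}_2^2$ and invoke the low-rank minimum weighted assignment oracle on $M$.

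The first thing to verify is that $M$ has rank $O(d)$. Let $X$ and $Y$ denote the $n \times (2d+2c+2)$ matrices whose rows are the vectors of $A'$ and $B'$, and let $u, v \in \mathbb{R}^n$ hold the squared norms $u_i = \norm{A'_i}_2^2$ and $v_j = \norm{B'_j}_2^2$. Expanding the squared distance gives the decomposition $M = u \mathbf{1}^T + \mathbf{1} v^T - 2 X Y^T$, so $\text{rank}(M) \leq 2 + (2d+2c+2) = O(d)$. An explicit factorization $M = U' (V')^T$ with $U', V'$ of width $O(d)$ can be produced in $O(nd)$ time, which is presumably the form the oracle expects, since a dense $n \times n$ matrix could not even be read in $O(n^{2-\epsilon})$ time.

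The second thing to verify is that the optimal assignment for $M$ encodes the squared closest-pair distance. The ``one cross-edge'' structural argument from Theorem~\ref{theorem:EMDCP} carries over to the sum-of-squares cost: squared auxiliary distances are $\norm{\text{adj}}_2^2 = n^{4k}d^2$, while any ``large'' squared distance (between $u$ and $v$, or between unrelated real vectors) is at least $dN^2 = d \cdot n^{32k}$. A direct exchange argument shows that replacing a $(u,v)$ edge together with one of its two forced cross-edges by two auxiliary edges strictly lowers the cost: the swap saves $\Theta(dN^2)$ and incurs only $O(n^{4k}d^2) \ll dN^2$. Iterating this swap drives the optimal assignment to pair $n-1$ real vectors of $A'$ with copies of $u$, pair $n-1$ real vectors of $B'$ with copies of $v$, and match the one remaining $x' = f(x)$ with the one remaining $y' = g(y)$, giving total cost
\[
2n \cdot n^{4k} d^2 + dN^2 + \min_{(x,y) \in A \times B} \norm{x - y}_2^2.
\]
Subtracting the two known constants and taking a square root recovers the bichromatic closest-pair distance.

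The main obstacle is the ``one cross-edge'' structural claim under squared-distance cost, but as sketched above it reduces to a short exchange argument exploiting the gap $dN^2 \gg n^{4k} d^2$; the rest is bookkeeping. The reduction runs in $O(nd) + T$ time, where $T = O(n^{2-\epsilon})$ is the oracle call, giving the claimed $O(n^{2-\epsilon})$ overall.
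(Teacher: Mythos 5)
Your proposal is correct and follows essentially the same route as the paper: reuse the Theorem~\ref{theorem:EMDCP} transformation, observe that $M_{ij}=\norm{A'_i-B'_j}_2^2$ has rank $O(d)$ via the standard $u\mathbf{1}^T+\mathbf{1}v^T-2XY^T$ decomposition, and show that the squared-distance gap forces exactly one cross-edge, so the minimum assignment cost is $2n\cdot n^{4k}d^2 + N^2d + \min\norm{a-b}_2^2$. Your explicit note about passing the oracle a width-$O(d)$ factorization rather than the dense $n\times n$ matrix is a sensible clarification the paper leaves implicit, but the argument is otherwise the same.
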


\begin{proof}
Let $A$ and $B$ be two sets of $n$ vectors in $d$ dimensions, with entries in $\{1, \dots, n^k\}$ for some constant $k>0$. Apply the transformation described in Theorem~\ref{theorem:EMDCP} to construct sets $A', B' \in \{0,\dots,n^{16k}\}^{2d+2c+2}$ where $c$ is as defined in the proof of the theorem. Define $$\text{SQEMD}(A', B') = \min_{\sigma: A' \to B'} \sum_{a' \in A'} \norm{a' - \sigma(a')}_2^2$$ where $\sigma$ ranges over all bijections from $A'$ to $B'$. Since $\norm{u - v}_2^2 \geq N^2d$ and $\norm{a' - b'}_2^2 \geq N^2 d$ for every $a' \in A' \setminus \{v\}$ and $b' \in B' \setminus \{u\}$, whereas $\norm{a' - u}_2^2 \ll N^2d/n$ and $\norm{b' - v}_2^2 \ll N^2 d/n$, the optimal matching $\sigma$ minimizes the number of $(u,v)$ and $(a',b')$ edges. In particular, exactly one element of $A' \setminus \{v\}$ is matched to an element of $B' \setminus \{u\}$. Thus, paralleling the proof of Theorem~\ref{theorem:EMDCP}, we get $$\text{SQEMD}(A', B') = 2(n-1)n^{4k}d^2 + \left(N^2 d + 2n^{4k}d^2 + \min_{a \in A, b \in B} \norm{a-b}_2^2\right).$$ 

Hence, to compute the bichromatic closest pair distance between $A$ and $B$, it suffices to compute $\text{SQEMD}(A', B')$. Representing $A'$ and $B'$ as $n \times (2d+2c+2)$ matrices, let $M$ be the $n \times n$ matrix defined by $M_{ij} = \norm{A'_i - B'_j}_2^2$. Then observing that $$M_{ij} = \sum_{k=1}^{2d+2c+2} (A'_{ik} - B'_{jk})^2 = \sum_{k=1}^{2d+2c+2} (A'_{ik})^2 + \sum_{k=1}^{2d+2c+2} (B'_{jk})^2 - 2\sum_{k=1}^{2d+2c+2} A'_{ik}B'_{jk},$$ we can write $M$ as the sum of $2d+2c+4$ rank-$1$ matrices, so $\text{rank}(M) \leq 2d+2c+4$. So by assumption, the minimum weight perfect matching in the complete bipartite graph determined by $M$ can be found in $O(n^{2-\epsilon} \poly(d))$ time. But the cost of the optimal matching is precisely $\text{SQEMD}(A', B')$.
\end{proof}

Applying Theorem~\ref{thm:chen} completes the proof of Theorem~\ref{theorem:lowrank}.

\section{Proof of Theorem~\ref{thm:approxtoov}}\label{section:approxov}

The theorem follows immediately from the reduction from Maximum Orthogonal Matching to EMD matching shown in section~\ref{section:approx}, and this next proposition.

\begin{proposition}
Suppose the Maximum Orthogonal Matching problem can be solved up to an additive factor of $n^\delta$ in $O(n^\gamma)$ time where $\delta < 1/2$. Then OV can be solved in $O(n^{\gamma/(1 - \delta)})$ time.
\end{proposition}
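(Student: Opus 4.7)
\bigskip

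\noindent\textbf{Proof plan for the final Proposition.}

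The plan is to reduce OV to approximate MOM by an \emph{amplification} of the number of orthogonal pairs: if a single OV witness exists in the original instance, we blow it up into many witnesses in the MOM instance so that the additive error $n^\delta$ of the MOM algorithm cannot wipe all of them out. Concretely, given an OV instance $A, B \subseteq \{0,1\}^d$ with $|A|=|B|=n$, I would form an MOM instance $(A', B')$ by taking $k$ copies of each vector in $A$ and $k$ copies of each vector in $B$, for a parameter $k$ to be chosen. Then $|A'| = |B'| = kn =: N$, and any single orthogonal pair $(a,b)$ in the original instance yields a matching in $(A', B')$ with at least $k$ orthogonal pairs (pair the $k$ copies of $a$ injectively with the $k$ copies of $b$). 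Conversely, if there is no OV pair, then $m_{\mathrm{OPT}}(A',B') = 0$.

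Now run the hypothesized MOM algorithm on $(A',B')$; it returns an injection $\pi: A' \to B'$ whose number of orthogonal pairs is at least $m_{\mathrm{OPT}} - N^\delta$. If $(A, B)$ is a YES instance then $m_{\mathrm{OPT}} \geq k$, so I would get at least $k - N^\delta = k - (kn)^\delta$ orthogonal pairs; if $(A,B)$ is a NO instance then $\pi$ has $0$ orthogonal pairs. To separate these cases it suffices to pick $k$ large enough that $k - (kn)^\delta \geq 1$. Since $k - k^\delta n^\delta = k(1 - n^\delta/k^{1-\delta})$, taking $k = \lceil (2 n^\delta)^{1/(1-\delta)} \rceil = \Theta(n^{\delta/(1-\delta)})$ makes the bound at least $k/2 \geq 1$. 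The algorithm then inspects each of the returned $\leq N$ pairs in $O(d)$ time and outputs YES iff some pair is orthogonal.

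For the runtime, $N = kn = \Theta(n^{1/(1-\delta)})$, so the MOM call costs $O(N^\gamma) = O(n^{\gamma/(1-\delta)})$, and the post-processing is $O(Nd) = o(N^\gamma)$ (absorbing polylogarithmic factors in $d$). The hypothesis $\delta < 1/2$ is exactly what keeps $1/(1-\delta) < 2$, so the blow-up stays strictly subquadratic and the overall runtime matches the claimed $O(n^{\gamma/(1-\delta)})$. The main conceptual step -- and the only place where one could slip -- is calibrating $k$: it must be large enough that the error $N^\delta$ does not swamp $m_{\mathrm{OPT}} \geq k$, yet small enough that $N = kn$ stays subquadratic, and the algebraic identity $k^{1-\delta} = 2 n^\delta$ is precisely the sweet spot that makes both requirements compatible when $\delta < 1/2$.
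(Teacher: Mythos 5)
Your proof is correct and takes essentially the same approach as the paper: blow up the OV instance by duplicating each vector $\Theta(n^{\delta/(1-\delta)})$ times so that a single orthogonal pair becomes a matching of that size, large enough to survive the additive error $N^\delta$ on the enlarged instance of size $N = \Theta(n^{1/(1-\delta)})$. The only minor quibble is your closing remark that $\delta < 1/2$ is ``exactly what keeps the blow-up strictly subquadratic'' --- the argument itself goes through for any constant $\delta < 1$ (your calibration $k - N^\delta \geq k/2 \geq 1$ needs only $\delta < 1$); the role of $\delta < 1/2$ is to make the resulting exponent $\gamma/(1-\delta)$ potentially subquadratic for interesting $\gamma$, not to make the reduction valid.
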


\begin{proof}
Let $A, B \subseteq \{0,1\}^d$ with $|A| = |B| = n$. We construct multisets $A'$ and $B'$ which consist of $2n^{\delta/(1-\delta)}$ copies of each $a \in A$, and $2n^{\delta/(1-\delta)}$ copies of each $b \in B$, respectively. We then run our approximate Maximum Orthogonal Matching algorithm on $A'$ and $B'$. If any orthogonal pair is found, we return it; otherwise we return that there is no orthogonal pair.

Since $|A'| = |B'| = 2n^{1/(1-\delta)}$, the time complexity of this algorithm is $O(n^{\gamma/(1 - \delta)})$. It is clear that if $A$ and $B$ have no orthogonal pair, then $A'$ and $B'$ have no orthogonal pair, so the algorithm correctly returns ``no pair''. 

Suppose that there are $a \in A$ and $b \in B$ with $a \cdot b = 0$ but the algorithm returns ``no pair''. Then the matching found by the algorithm had no orthogonal pairs. However, there is a matching consisting of $2n^{\delta/(1-\delta)}$ pairs. Since $|B'|^\delta < 2n^{\delta/(1-\delta)}$, this contradicts the approximation guarantee of the Maximum Orthogonal Matching algorithm.
\end{proof}

\section{Hardness of $(k, 2k)$-Find-OV}\label{section:k2kfindov}

The $(k, 2k)$-Find-OV problem provides some sense of the relative ``powers'' of the Orthogonal Vectors Conjecture and the Hitting Set Conjecture, as well as another example of how the Hitting Set Conjecture can be used to explain hardness of approximation problems. Reducing from OV, we get the following hardness result, and it is not clear how to make any improvement. Note that this proof extends to the $(1,2k)$-Find-OV problem, for which this lower bound is tight, due to a random sampling algorithm.

\begin{proposition}
Fix $\delta \in (0,1)$. Assuming OVC, any algorithm for $(n^\delta, 2n^\delta)$-Find-OV requires $\Omega(n^{2-\delta-o(1)})$ time.
\end{proposition}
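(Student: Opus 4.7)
The plan is to reduce OV to $(n^\delta, 2n^\delta)$-Find-OV via simple duplication. Given an OV instance $(A_0, B_0)$ of size $m$ in dimension $d = \omega(\log m)$, I pick $p = \Theta(n^{\delta/2})$ large enough that $p^2 \ge 2n^\delta$, where $n = pm$ (so $n = \Theta(m^{2/(2-\delta)})$), and form $A, B$ of size $n$ each by taking $p$ distinctly-labelled copies of every vector in $A_0$ and $B_0$ respectively. The number of orthogonal pairs in $(A, B)$ is exactly $p^2$ times the number of orthogonal pairs in $(A_0, B_0)$. This gives a clean dichotomy: if the OV instance is YES, $(A, B)$ has at least $p^2 \geq 2n^\delta$ orthogonal pairs, so the Find-OV promise holds; if it is NO, $(A, B)$ has zero orthogonal pairs.

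The main obstacle is extracting a correct decision in both cases, since the Find-OV algorithm's behavior is undefined when its promise fails. I resolve this by running Find-OV on $(A,B)$ and then explicitly verifying each of the $n^\delta$ returned pairs is orthogonal, at a cost of only $O(n^\delta d)$. In the YES case the promise holds, so the algorithm must return $n^\delta$ genuine orthogonal pairs and verification succeeds. In the NO case there is no orthogonal pair in $(A,B)$ at all, so any output must fail verification. Hence we decide OV reliably via this post-hoc certificate check, without ever having to trust the algorithm on promise-violating inputs.

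For the running time, a hypothetical $O(n^{2-\delta-\epsilon})$-time algorithm for $(n^\delta,2n^\delta)$-Find-OV would, together with the $O(nd)$ verification, yield an OV algorithm running in time
\[
O\bigl(n^{2-\delta-\epsilon} + nd\bigr) \;=\; O\bigl(m^{(2-\delta-\epsilon)\cdot 2/(2-\delta)}\bigr) \cdot \poly(d) \;=\; O\bigl(m^{2 - 2\epsilon/(2-\delta)}\bigr) \cdot \poly(d),
\]
which is $m^{2 - \Omega(\epsilon)}$ and contradicts OVC (here $d = \omega(\log m) = \omega(\log n)$ is preserved, since $n$ is polynomial in $m$). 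The $(1,2k)$-Find-OV extension mentioned in the proposition's surrounding remark follows from the same reduction with $k$ replaced by $1$, since returning even a single orthogonal pair is enough to certify OV.
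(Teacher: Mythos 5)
Your proof is correct and takes essentially the same approach as the paper: duplicate each OV vector $\Theta(m^{\delta/(2-\delta)})$ times so that any single orthogonal pair in the original instance becomes at least $2N^\delta$ orthogonal pairs in the blown-up instance of size $N$, then run the Find-OV algorithm and check whether its output contains a genuine orthogonal pair. Your write-up is slightly more explicit about the post-hoc verification step (needed since the promise fails on NO instances) and the final exponent arithmetic, both of which the paper compresses into brief remarks.
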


\begin{proof}
Suppose that there exists an $O(n^{2-\delta-\epsilon})$ time algorithm \textsc{find} for $(n^\delta, 2n^\delta)$-Find-OV. Here is an algorithm for OV: given sets $A, B \subseteq \{0,1\}^d$ with $|A| = |B| = n$, duplicate each $a \in A$ and each $b \in B$ exactly $2n^{\delta/(2-\delta)}$ times. If the original number of orthogonal pairs was $r$, then the new number is $4rn^{2\delta/(2-\delta)}$. For $r \geq 1$, this exceeds $2(n \cdot 2n^{\delta/(2-\delta)})^\delta$, so applying \textsc{find} yields a positive number of orthogonal vectors if and only if $r > 0$. It's easy to check that the time complexity is subquadratic. 
\end{proof}

On the other hand, under the Hitting Set Conjecture, we can obtain quadratic hardness. When $k = n/2$, hardness follows from Theorem~\ref{thm:hs}, but it holds in greater generality. In particular, we provide a proof of conditional hardness for $k = \sqrt{n}$, and it extends naturally to any $k = n^\gamma$ for $\gamma \in (0,1)$. The proof takes inspiration from the reduction from Hitting Set to OV \cite{Abboud2016}, with a few extra twists.

\begin{theorem}
If the $(\sqrt{n}, 2\sqrt{n})$-Find-OV problem can be solved in $O(n^{2-\epsilon})$ time for some $\epsilon > 0$, then Hitting Set can be solved in $O(n^{2-\delta})$  time for some $\delta > 0$.
\end{theorem}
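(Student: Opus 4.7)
The plan is to adapt the approach of Theorem~\ref{thm:hs} (which handled the $k = n/2$ case) together with the reduction from Hitting Set to Orthogonal Vectors of \cite{Abboud2016}. Given a Hitting Set instance $(A, B)$ with $|A| = |B| = n$, the algorithm will combine a random-sampling preprocessing step with iterative applications of $(\sqrt{N}, 2\sqrt{N})$-Find-OV at a carefully chosen sequence of scales.

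The first step is degree reduction. Sample $\Theta(\sqrt{n} \log n)$ vectors $B_s \subseteq B$ uniformly, and for each $a \in A$ check whether any $b \in B_s$ satisfies $a \cdot b = 0$. This preprocessing takes $\tilde O(n^{3/2} d)$ time, and by a Chernoff bound it identifies, with high probability, every $a$ whose orthogonality degree in $B$ exceeds $\sqrt{n}$. The remaining uncovered set $R \subseteq A$ then has the property that every $a \in R$ has at most $\tilde O(\sqrt{n})$ orthogonal partners in $B$, so the Find-OV-style output can no longer be ``concentrated'' on a few extremely high-degree $a$'s.

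The second step is iterative Find-OV in the style of Theorem~\ref{thm:hs}: in each phase, the uncovered vectors are duplicated to fit a $(\sqrt{N}, 2\sqrt{N})$-Find-OV instance, and the oracle's output is decoded to remove more vectors from $R$. The key ``extra twists'' come in here. Each duplicate of $a \in R$ must be tagged with a short unique identifier (placed in free coordinates, preserving inner products with $B$) so that the $\sqrt{N}$ orthogonal pairs returned by an adversarial Find-OV oracle are forced to involve $\Omega(\sqrt{N})$ distinct original vectors in $R$, in the spirit of the structured gadgets underlying the Hitting Set to OV reduction. Choosing the per-phase size $N$ and the number of phases via a geometric schedule (halving $|R|$ each phase) causes the total cost to telescope to $O(n^{2-\delta})$ for some $\delta>0$ depending on $\epsilon$. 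After $O(\log n)$ phases, either every $a \in A$ has been marked non-hitting, or $R$ is small enough that brute-force checking of $R \times B$ is subquadratic and directly exhibits a hitting vector or certifies none exists.

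The main obstacle is controlling the adversarial output of Find-OV in the iterative step. A naive implementation without the identifier-tag twist could leave the adversary free to return all $\sqrt{N}$ pairs from duplicates of a single original $a \in R$, giving only $O(1)$ progress per call and total cost $\tilde O(n^{2.5-\epsilon})$, which is super-quadratic for small $\epsilon$. Combining the degree bound from Step 1 with the tagging gadget from Step 2 is what forces the output to spread across many distinct originals; carrying this adversarial argument through while simultaneously guaranteeing at least $2\sqrt{N}$ orthogonal pairs in each padded instance is the main technical work I anticipate.
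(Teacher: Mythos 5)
The main technical idea in your proposal---tagging each duplicate of $a$ with a unique identifier placed in free coordinates that preserve inner products with $B$---does not achieve what you want it to. If the tag coordinates are zero in every $b \in B$ (which is forced by ``preserving inner products with $B$''), then all duplicates of the same original $a$ are orthogonal to exactly the same set of vectors in $B$. An adversarial $(\sqrt{N},2\sqrt{N})$-Find-OV oracle can therefore still return $\sqrt{N}$ formally distinct pairs that all involve copies of a single original $a$ (even the same $b$ paired with $\sqrt{N}$ different copies of $a$ is a legal output). So the tags do not force $\Omega(\sqrt{N})$ distinct originals, which was the sole purpose of the gadget; the obstacle you correctly identified is still present after your fix. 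Moreover, even switching to the more forgiving accounting of ``orthogonal pairs eliminated'' rather than ``distinct $a$'s marked'' does not rescue the duplicate-and-pad strategy: with $c$ copies of each $a \in R$ and $|A'| = c|R| = n$, one can only guarantee that a successful call eliminates about $\sqrt{|R|/c}$ genuine orthogonal pairs, and the resulting bound on the number of calls multiplied by the per-call cost $n^{2-\epsilon}$ comes out superquadratic.

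The paper's proof uses a different, and essential, structural move that your proposal omits: after the degree-reduction preprocessing, it partitions \emph{both} $A$ and $B$ into $k = n^{1/3-\alpha}$ blocks and runs $(\sqrt{n/k}, 2\sqrt{n/k})$-Find-OV repeatedly on each of the $k^2$ sub-instances $(A_i, B_j)$, removing the returned $a$'s globally. Each successful call on a block eliminates at least $\sqrt{n/k}$ orthogonal pairs (since the returned pairs within $(A_i, B_j)$ are distinct, the removed $a$'s must collectively account for that many pairs even if they are not themselves distinct), and the degree bound caps the total number of pairs at $n^{1+2\alpha}$, bounding successful calls by $n^{2/3 + 3\alpha/2}$; failed calls number $k^2$. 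Since each call now costs only $O((n/k)^{2-\epsilon})$, the total is subquadratic. Without the two-sided partition you have no way to make each Find-OV call cheap while still guaranteeing the $2\sqrt{N}$ promise and bounding the number of calls, and the geometric ``halve $|R|$ each phase'' schedule from Theorem~\ref{thm:hs} does not transfer, because a Find-OV oracle that outputs a count of pairs rather than a near-complete set of orthogonal $a$'s is not guaranteed to make $\Omega(|R|)$ progress per phase.

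Your degree-reduction step is in the right spirit (the paper does the same with a threshold $n^{2\alpha}$ and $n^{1-\alpha}$ samples per vector), but by itself it is not enough; the missing piece is the $k \times k$ block decomposition and the pair-counting charging argument.
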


\begin{proof}
Let $\textsc{find}$ be the presupposed algorithm for $(\sqrt{n}, 2\sqrt{n})$-Find-OV. Set $\alpha = \epsilon/7$. Let $A, B \subseteq \{0,1\}^d$ with $|A| = |B| = n$. Without loss of generality, assume that no vector is all-zeroes. Here is an algorithm:
\begin{enumerate}
\item For each $a \in A$, randomly sample $n^{1-\alpha}$ vectors from B. If any of these is orthogonal to $a$, mark $a$ and remove it from $A$, replacing it with an all-ones vector.

\item Set $k = n^{1/3 - \alpha}$. Partition $A$ into sets $A_1,\dots,A_k$ of approximately equal size, and similarly partition $B$ into sets $B_1,\dots,B_k$. For each pair $(A_i, B_j)$:
\begin{enumerate}
\item Apply \textsc{find} to $(A_i, B_j)$.
\item If the output is not $\sqrt{n/k}$ orthogonal pairs, then continue to the next pair $(A_i, B_j)$.
\item Otherwise, suppose that the output is $\{(a_m, b_m)\}_{m=1}^{\sqrt{n/k}}$. For each vector $a \in \{a_m\}_{m=1}^{\sqrt{n/k}}$, mark $a$ and remove it from $A_i$ (and from $A$), replacing it with an all-ones vector.
\item Go to (a).
\end{enumerate}
\item If the number of unmarked input vectors exceeds $2n^{1 - 3\alpha/2}$, return ``NO'' and exit.
\item For each $a \in A$, if $a$ is not the all-ones vector, iterate over all $b \in B$, and mark $a$ if any $b \in B$ is orthogonal.
\item Return ``YES'' if every vector originally in $A$ is now marked, and ``NO'' otherwise.
\end{enumerate}
We claim that this algorithm solves Hitting Set in strongly subquadratic time. Correctness is relatively simple: a vector $a \in A$ is only marked by the above algorithm if some $b \in B$ is found for which $a \cdot b = 0$. Thus, if some $a \in A$ is a hitting vector for $B$, then it is never marked, so the algorithm returns ``NO''. 

Conversely, suppose that every $a \in A$ is orthogonal to some $b \in B$. Then the number of unmarked input vectors in Step 3 is at most the number of remaining orthogonal pairs. But each $(A_i, B_j)$ contains at most $2\sqrt{n/k}$ orthogonal pairs after Step 2 finishes, so the number of remaining orthogonal pairs in Step 3 is at most $k^2 (2\sqrt{n/k}) = 2n^{1-3\alpha/2}$. Thus, the algorithm continues to Step 4. Every $a \in A$ which has not been marked by the end of Step 2 is tested against every $b \in B$ in Step 4. Therefore every vector is marked, so the algorithm returns ``YES''.

Turning to time complexity, Step 1 takes $O(n^{2-\alpha})$ time. The complexity of Step 2 is dominated by the calls to \textsc{find}. For each pair $(A_i, B_j)$ there is at most one call to \textsc{find} for which the output is not $\sqrt{n/k}$ orthogonal pairs. Hence, there are $k^2 = n^{2/3 - 2\alpha}$ such ``failed'' calls. To bound the number of ``successful'' calls to \textsc{find}, for which the output \textit{is} $\sqrt{n/k}$ orthogonal pairs, note that after Step 1, with high probability each $a \in A$ is orthogonal to at most $n^{2\alpha}$ vectors $b \in B$, so the total number of orthogonal pairs is at most $n^{1+2\alpha}$. Each successful call eliminates $\sqrt{n/k} = n^{1/3 + \alpha/2}$ orthogonal pairs, so there are at most $n^{2/3 + 3\alpha/2}$ successful calls. This bound dominates the bound on failed calls. Each call takes time $O((n/k)^{2-\epsilon})$, so the time complexity of Step 2 is asymptotically $$n^{\left(\frac{2}{3}+\alpha\right)(2-\epsilon)} n^{\frac{2}{3} + \frac{3\alpha}{2}} = n^{2 - \frac{\epsilon}{6} - \frac{\epsilon^2}{7}}.$$

Step 3 takes negligible time. Finally, in Step 4, there are at most $2n^{1-3\alpha/2}$ vectors $a \in A$ which are not the all-ones vector (since each of these is unmarked), so the complexity is $O(n^{2 - 3\alpha/2})$.

Hence, the overall time complexity is bounded by $O(n^{2 - \epsilon/7})$.
\end{proof}

\end{document}